\newtheorem{theorem}{Theorem}[section]
\newtheorem{lemma}[theorem]{Lemma}
\newtheorem{claim}[theorem]{Claim}
\theoremstyle{definition}
\newtheorem{definition}{Definition}[section]
\theoremstyle{remark}
\newcommand{\E}{\mathbb{E}}
\newcommand{\satisfying}[1]{\ensuremath{\mathsf{sol}(#1)}}
\newcommand{\est}{\ensuremath{\mathsf{Est}}}
\newcommand{\algo}{\ensuremath{\mathsf{Alg}}}
\newcommand{\formula}{\ensuremath{\phi}}
\newcommand{\queryformula}{\ensuremath{\psi}}
\newcommand{\tf}{\ensuremath{\{T, F\}}}
\newcommand{\vars}{\ensuremath{\mathsf{vars}}}
\newcommand{\At}{\ensuremath{\mathsf{At}}}
\newcommand{\good}{\ensuremath{\mathsf{Good}}}
\newcommand{\sat}{\ensuremath{\mathsf{SAT\textrm{-}Sample}}}
\newcommand{\num}[2]{\ensuremath{\mathsf{N}_{#2}(#1)}}
\newcommand{\counter}{\ensuremath{\mathsf{SAT\textrm{-}Sample}\ $counter$ }}
\newcommand{\counters}{\ensuremath{\mathsf{SAT\textrm{-}Sample}\ $counters$ }}
\DeclareMathOperator*{\argmin}{arg\,min}
\title{Approximate Model Counting: Is SAT Oracle More Powerful than NP Oracle?}
\author{Diptarka Chakraborty\footnote{National University of Singapore, Singapore. Supported in part by an MoE AcRF Tier 2 grant (MOE-T2EP20221-0009) and Google South \& South-East Asia Research Award. Email: diptarka@comp.nus.edu.sg}\and Sourav Chakraborty \footnote{Indian Statistical Institute, Kolkata. Email: sourav@isical.ac.in} \and Gunjan Kumar \footnote{National University of Singapore, Singapore. Supported in part by  National Research Foundation Singapore under its NRF Fellowship Programme[NRF-NRFFAI1-2019-0004 ]. Email: dcsgunj@nus.edu.sg} \and Kuldeep S. Meel\footnote{National University of Singapore, Singapore. Supported in part by  National Research Foundation Singapore under its NRF Fellowship Programme[NRF-NRFFAI1-2019-0004 ]  and Campus for Research
Excellence and Technological Enterprise (CREATE) programme, Ministry of Education Singapore Tier 2 grant MOE-T2EP20121-0011, and Ministry of Education Singapore Tier 1 Grant [R-252-000-B59-114 ]. Email: meel@comp.nus.edu.sg}}
\begin{document}

\maketitle


\begin{abstract}

Given a Boolean formula $\formula$ over $n$ variables, the problem of model counting is to compute the number of solutions of $\formula$. Model counting is a fundamental problem in computer science with wide-ranging applications in domains such as quantified information leakage, probabilistic reasoning, network reliability, neural network verification, and more. Owing to the \#P-hardness of the problems, Stockmeyer initiated the study of the complexity of approximate counting. Stockmeyer showed that $\log n$  calls to an NP oracle are necessary and sufficient to achieve $(\varepsilon,\delta)$ guarantees. The hashing-based framework proposed by Stockmeyer has been very influential in designing practical counters over the past decade, wherein the SAT solver substitutes the NP oracle calls in practice. It is well known that an NP oracle does not fully capture the behavior of SAT solvers, as SAT solvers are also designed to provide satisfying assignments when a formula is satisfiable, without additional overhead. Accordingly, the notion of SAT oracle has been proposed to capture the behavior of SAT solver wherein given a Boolean formula, an SAT oracle returns a satisfying assignment if the formula is satisfiable or returns unsatisfiable otherwise. Since the practical state-of-the-art approximate counting techniques use SAT solvers, a natural question is whether an SAT oracle is more powerful than an NP oracle in the context of approximate model counting. 

The primary contribution of this work is to study the relative power of the NP oracle and SAT oracle in the context of approximate model counting. The previous techniques proposed in the context of an NP oracle are weak to provide strong bounds in the context of SAT oracle since, in contrast to an NP oracle that provides only one bit of information, a SAT oracle can provide $n$ bits of information. We therefore develop a new methodology to achieve the main result: a SAT oracle is no more powerful than an NP oracle in the context of approximate model counting. 

\end{abstract}

\section{Introduction}

Let $\formula$ be a Boolean formula over $n$ propositional variables. An assignment $s \in \tf^n$ is called a {\em satisfying assignment} if it makes $\formula$ evaluate to true. Let $\satisfying{\formula}$ denote the set of all satisfying assignments. 
The model counting problem is to compute $|\satisfying{\formula}|$ for a given $\formula$. 
It is a fundamental problem in computer science and has numerous applications across different fields such as quantified information leakage, probabilistic reasoning,  network reliability, neural network verification, and the like~\cite{roth1996hardness,sang2005performing,valiant1979complexity,fredrikson2014satisfiability,duenas2017counting,baluta2019quantitative}.
The seminal work of Valiant~\cite{valiant1979complexity} showed that the problem of model counting is \#P-complete, and consequently, one is often interested in approximate variants of the problem. In this paper, we consider the following problem:

\ 

\noindent{\bfseries Approximate Model Counting} 

\begin{description}
	\item[Input] A formula $\formula$, tolerance parameter $\varepsilon>0$, and confidence parameter $\delta \in (0,1)$. 
	\item[Output] Compute an estimate $\est$ such that 
	\begin{align*}
		\Pr\left[\frac{|\satisfying{\formula}|}{1+\epsilon} \le \est \le (1+\epsilon) |\satisfying{\formula}|\right] \ge 1- \delta.
	\end{align*}
\end{description}

Stockmeyer~\cite{stockmeyer1983complexity} initiated the study of the complexity of approximate model counting. Stockmeyer's seminal paper made two foundational contributions: the first contribution was to define the query model that could capture possible natural algorithms yet amenable enough to theoretical tools to allow non-trivial insight. To this end, Stockmeyer proposed the query model wherein one can construct an arbitrary set $S$ and query an NP oracle to determine if $|\satisfying{\formula} \cap S| \geq 1$. Stockmeyer showed that under the above-mentioned query model, $\log n$ calls to an NP oracle are necessary and sufficient (for a fixed $\varepsilon$ and $\delta$). Furthermore,  Stockmeyer introduced a hashing-based algorithmic procedure to achieve the desired upper bound that makes $O(\log n)$ calls to NP-oracle. The lack of availability of powerful reasoning systems for problems in NP dissuaded the development of algorithmic frameworks based on Stockmeyer's hashing-based framework until the early 2000s~\cite{gomes2006model}. 

Motivated by the availability of powerful SAT solvers, there has been a renaissance in the development of hashing-based algorithmic frameworks for model counting, wherein a call to an NP oracle is handled by an SAT solver in practice. The current state-of-the-art approximate model counter, ApproxMC~\cite{chakraborty2016algorithmic}, relies on the hashing-based framework and is able to routinely handle problems involving hundreds of thousands of variables. The past decade has witnessed a sustained interest in further enhancing the scalability of these approximate model counters. It is perhaps worth highlighting that Stockmeyer's query model captures queries by ApproxMC. 

While the current state-of-the-art approximate model counters rely on the hashing-based framework, they differ significantly from Stockmeyer's algorithm for approximate model counting. The departures from Stockmeyer's algorithm have been deliberate and have often been crucial to attaining scalability. In particular, ApproxMC crucially exploits the underlying SAT solver's ability to return a satisfying assignment to attain scalability. In this context, it is worth highlighting that, unlike an NP oracle that only returns the answer Yes or No for a given Boolean formula, all the known SAT solvers are capable of returning a satisfying assignment if the formula is satisfiable without incurring any additional overhead. Observe that one would need $n$ calls to an NP oracle to determine a satisfying assignment. From this viewpoint, an NP oracle does not fully capture the behavior of an SAT solver, and one needs a different notion to model the behavior of SAT solver. 

Delannoy and Meel~\cite{delannoy2022almost} sought to bridge the gap between theory and practice by proposing the notion of a SAT oracle. Formally, a SAT oracle takes in a Boolean formula $\formula$ as input and returns a satisfying assignment $s \in \satisfying{\formula}$ if $\formula$ is satisfiable and $\bot$, otherwise. It is worth highlighting that we may need $n$ calls to an NP oracle to simulate a query to a SAT oracle, and therefore, it is conceivable for an algorithm to make $O(\log n)$ calls to a SAT oracle but $O(n \log n)$ calls to an NP oracle. Delannoy and Meel showcased precisely such behavior in the context of \emph{almost-uniform generation}. Their proposed algorithm, $\mathsf{UniSamp}$ makes $O(\log n)$ calls to a SAT oracle and would require $O(n \log n)$ calls to an NP oracle if one were to replace a SAT oracle with an NP oracle. At the same time, it is not necessary that there would be a gap of $n$ calls for every algorithm: simply consider the problem of determining whether a formula is satisfiable or not. Only one call to an NP oracle (and similarly to a SAT oracle) suffices.

Furthermore, the notion of the SAT oracle has the potential to be a powerful tool to explain the behavior of algorithms, as highlighted by Delannoy and Meel. Given access to an NP oracle, the sampling algorithm due to Jerrum, Valiant, and Vazirani~\cite{jerrum1986random} (referred to as $\mathsf{JVV}$ algorithm) makes $O(n^2 \log n)$ calls to an NP oracle as well as a SAT oracle, i.e., there are no savings from the availability of a SAT oracle. On the other hand, the algorithm, $\mathsf{UniSamp}$ makes 
$O(\log n)$ and $O(n \log n)$  calls to SAT and an NP oracle respectively. Therefore, the NP oracle model would indicate that one should expect the performance gap between $\mathsf{JVV}$ and $\mathsf{UniSamp}$ to be linear, while the SAT oracle model indicates an exponential gap. The practical implementations of $\mathsf{JVV}$  and $\mathsf{UniSamp}$ indeed indicate the performance gap between them to be exponential rather than linear. Therefore, analyzing problems under the SAT oracle model has the promise to have wide-ranging consequences. 

In this paper, we analyze the complexity of the problem of approximate model counting given access to a SAT oracle. Our study is motivated by two observations: 

\begin{description}

	\item[O1]  The modern state-of-the-art hashing-based techniques differ significantly from Stockmeyer's algorithmic procedure and, in particular, exploit the availability of SAT solvers. Yet, they make $O(\log n)$ calls to a SAT oracle, which coincides with the number of NP oracle calls in Stockmeyer's algorithmic procedure. 
	
		\item[O2]  Stockmeyer provided a matching lower bound of $\Omega(\log n)$ on the number of NP calls, which follows from the simple observation that for a fixed $\varepsilon$, there are $\Theta(n)$ possible outputs that an algorithm can return. Since every NP call returns an answer, Yes or No, the trace of an algorithm can be viewed as a binary tree such that every leaf represents a possible output value. Therefore, the height of the tree (i.e., the number of NP calls) must be $\Omega(\log n)$. Since a SAT oracle returns a satisfying assignment (i.e., provides $n$ bits of information), the trace of the algorithm is no longer a binary tree, and therefore, Stockmeyer's analysis does not extend to the case of SAT oracles for approximate model counting. 
\end{description}

To summarize, the best-known upper bound for SAT oracle calls for approximate model counting is $O(\log n)$, which matches the upper bound for NP oracle calls. However, the technique developed in the context of achieving a lower bound for NP oracle calls does not apply to the case of SAT oracle. Therefore, one wonders whether there exist algorithms with a lower number of SAT oracle calls. In other words, are SAT oracles more powerful than NP oracles for the problem of approximate model counting? 

The primary contribution of this work is to resolve the above challenge. In contrast to the problem of uniform sampling, we reach a starkly different conclusion: SAT oracles are no more powerful than NP oracles in the context of approximate model counting. Formally, we prove the following theorem:

\begin{restatable}{thm}{mainthm}
\label{thm:main} For any $\epsilon, \delta\in (0,1)$, 
given a formula $\formula$, computation of $(\varepsilon,\delta)$-approximation of $|\satisfying{\formula}|$ requires $\tilde{\Omega}(\log n)$\footnote{The tilde hides a factor of $\log\log n$} queries to a SAT oracle. 
\end{restatable}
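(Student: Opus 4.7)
I would prove the bound via Yao's minimax principle. It suffices to exhibit a distribution $\mathcal{D}$ over $n$-variable formulas and an adversarial witness-selection rule such that every deterministic algorithm making $t=o(\log n/\log\log n)$ SAT queries errs on more than a $\delta$-fraction of $\mathcal{D}$. I would take $\mathcal{D}$ uniform on a family $\{\formula_k\}_{k=1}^{K}$ with $K=\Theta(n)$ and solution counts $|\satisfying{\formula_k}|$ pairwise separated by at least a $(1+\varepsilon)^2$ factor; then any $(1+\varepsilon)$-approximation must map distinct $k$'s to distinct outputs, and so a deterministic algorithm correct on a $(1-\delta)$-fraction of inputs must generate at least $\Omega(n)$ distinct transcripts.

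The core technical step is to design the family and an adversarial witness-selection rule so that, for every query formula $\queryformula$, the set $\{a(\formula_k,\queryformula):k\in[K]\}\cup\{\bot\}$ of possible oracle responses has cardinality $O(\log n)$. Given this bound the decision tree of any $t$-query algorithm has at most $O(\log n)^t$ leaves, and requiring $O(\log n)^t\ge\Omega(n)$ forces $t=\Omega(\log n/\log\log n)$, completing the argument via Yao's principle. Equivalently, one can phrase the same bound information-theoretically: by the chain rule, $I(\formula;\text{transcript})\le t\cdot\max_i H(a_i\mid\queryformula_i,\text{history}_{<i})\le t\log\log n+O(t)$, which must exceed $\log K=\log n-O(1)$ for a correct estimate, again yielding $t=\tilde{\Omega}(\log n)$.

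The main obstacle is realizing this $O(\log n)$ bound uniformly across all queries $\queryformula$. A SAT oracle can in principle return any of $2^n$ witnesses, and natural constructions---nested cubes, Hamming balls, or random subsets of prescribed sizes---admit adversaries whose canonical witness can take up to $K=\Theta(n)$ distinct values per query, yielding only a trivial lower bound. The family must instead be engineered so that the $\satisfying{\formula_k}$'s share a large common ``core'' of canonical witnesses and differ only by structured extensions---for instance via algebraic or hash-based constructions---so that the canonical witness of $\satisfying{\formula_k}\cap\satisfying{\queryformula}$ stabilizes across most $k$ on every query. Simultaneously achieving the pairwise $(1+\varepsilon)^2$-separation of counts and the uniform $O(\log n)$ answer-support is the main technical challenge, and is precisely the ``new methodology'' promised in the introduction.
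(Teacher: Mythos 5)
Your high-level plan (Yao's minimax $+$ a family of formulas with pairwise $(1+\varepsilon)^2$-separated counts $+$ an adversarial witness rule that compresses the per-query answer support $+$ a decision-tree/counting or Fano-type argument) is a legitimate route to the stated Theorem~\ref{thm:main}, and it is genuinely different from what the paper does. But the proposal has a real gap, and also a misconception worth flagging.

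First, the misconception: you dismiss nested constructions on the grounds that an adversary's ``canonical witness can take up to $K=\Theta(n)$ distinct values per query.'' But \emph{you} get to choose the witness rule, and the right choice makes nested families work extremely well. Take $\satisfying{\formula_1} \supset \satisfying{\formula_2} \supset \dots \supset \satisfying{\formula_K}$ with $|\satisfying{\formula_k}|$ separated by $(1+\varepsilon)^2$ factors, and for a query $A$ define $k^*(A)=\max\{k : \satisfying{\formula_k}\cap A\neq\emptyset\}$ and a canonical $s_A\in\satisfying{\formula_{k^*(A)}}\cap A$. By nesting, $s_A\in\satisfying{\formula_k}\cap A$ for every $k\le k^*(A)$, so a valid adversary returns $s_A$ for $k\le k^*(A)$ and $\bot$ otherwise. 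The per-query answer support is then $\{s_A,\bot\}$ of size \emph{two}, not $O(\log n)$. This gives a branching factor of $2$, hence at most $2^t$ transcripts after $t$ queries, hence $t\ge\log((1-\delta)K)=\Omega(\log n)$: a cleaner bound with no $\log\log n$ loss, and exactly the ``stabilization'' property you say is the main challenge. So the construction you flag as ``the main technical challenge'' is in fact essentially free in the adversarial model once you choose the witness rule correctly. Your proposal, however, stops before carrying this out, so as written it is a plan rather than a proof.

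Second, and more importantly, this adversarial route does not recover what the paper actually proves. The paper deliberately works in a stronger model: Theorem~\ref{thm:main2} shows the $\tilde\Omega(\log n)$ bound even against the $\sat$ oracle that returns a \emph{uniformly random} satisfying assignment, and Theorem~\ref{thm:main} is obtained only as a corollary. Against a random-sample oracle, nested cubes (or Hamming balls) collapse completely: a single uniform sample from $\{0,1\}^{n-k}\times\{0\}^{k}$ reveals $k$ outright, so one query suffices. This is precisely why the paper instead builds its hard instance from $k$-wise independent hash functions (Section~\ref{sec:hardinstance}), so that a random sample from any queried atom carries almost no information about $\ell$, and then needs the semi-oblivious reduction (Lemma~\ref{obs:core}) and the Fano/KL machinery with the auxiliary variables $Y_i$ (Lemmas~\ref{lem:bound-mutual-inf-overall} and~\ref{lem:bound-mutual-inf}) to control what an adaptive algorithm can learn from unconcentrated samples. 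In short: your approach, once fleshed out, is a simpler and even quantitatively sharper proof of Theorem~\ref{thm:main} in the worst-case SAT-oracle model, but it says nothing about Theorem~\ref{thm:main2}, which is the technically substantial result of the paper; and as presented the proposal omits the construction entirely and mis-predicts which constructions succeed.
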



The establishment of the above theorem turned out to be highly challenging as the existing approaches in the context of NP oracles are not applicable to the SAT oracles. We provide an overview of our approach below.  

\subsection{Technical Overview}

In order to provide the lower bound on the number of queries required by the SAT oracle, we work with a stronger SAT oracle model. In particular, an answer from a (standard) SAT oracle does not provide any extra guarantee/information other than that the returned assignment is a satisfying assignment of the queried formula. Our lower bound works even if we consider that the returned satisfying assignment is chosen randomly from the set of satisfying assignments. More specifically, we consider a stronger model, namely \emph{$\sat$ oracle}, which returns a uniformly chosen solution of a queried formula $\formula$ whenever the formula is satisfiable. It is worth remarking that while a SAT oracle can be simulated by only $n$ queries to an NP oracle, the best-known technique to simulate $\sat$ makes $O(n^2 \log n)$ queries to an NP oracle~\cite{BGP00, delannoy2022almost}. We prove the following theorem which implies Theorem~\ref{thm:main}. 

\begin{restatable}{thm}{mainthm2}
\label{thm:main2} For any $\epsilon< 1/2$ and any $\delta\leq 1/6$, 
given a formula $\formula$, computation of $(\varepsilon,\delta)$-approximation of $|\satisfying{\formula}|$ requires $\tilde{\Omega}(\log n)$ queries to a $\sat$ oracle. 
\end{restatable}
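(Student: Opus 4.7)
The plan is to apply Yao's minimax principle: I exhibit a hard input distribution $\mathcal{D}$ on formulas such that any deterministic algorithm making $t$ queries to $\sat$ and producing a correct $(\varepsilon,\delta)$-approximation must have $t = \tilde{\Omega}(\log n)$. The distribution $\mathcal{D}$ is a uniform mixture of $K = \Theta(n)$ sub-distributions $\mathcal{D}_1, \ldots, \mathcal{D}_K$, where a formula drawn from $\mathcal{D}_j$ has $|\satisfying{\formula}|$ equal to a target value $N_j$. The $N_j$ are spaced so that $N_{j+1}/N_j \geq (1+\varepsilon)^3$, forcing any correct $(\varepsilon,\delta)$-approximator to identify $j$ from the mixture with probability at least $1-\delta$.

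For each $j$, I would construct $\mathcal{D}_j$ by sampling a uniformly random affine subspace, realized as the zero set of a uniformly random affine map $h : \tf^n \to \tf^{n-k_j}$ with $k_j = \log_2 N_j$, and letting $\formula$ be a CNF encoding of membership in the subspace. This gives $|\satisfying{\formula}| = N_j$ exactly, while $\satisfying{\formula}$ is a uniformly random $k_j$-dimensional affine subspace. The important structural property is that, because of this algebraic randomness, any query answer will depend on the hidden subspace primarily through a small number of low-entropy statistics.

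The central argument bounds the mutual information $I(J;\tau)$ between the index $J$ and the transcript $\tau$ of $t$ adaptive queries. For a query $\queryformula$ (possibly depending on past answers), if $\queryformula$ does not effectively depend on $\formula$, the answer is independent of $J$. Otherwise one may write the effective query as $\formula \wedge g$ for some $g$ computable from the transcript, so the answer is a uniformly random element of $\satisfying{\formula} \cap \satisfying{g}$. Two concentration lemmas then show: (i) the probability that this intersection is non-empty is determined up to a $1\pm o(1)$ factor by the single quantity $|\satisfying{g}| \cdot N_j / 2^n$, and hence contributes at most $O(\log\log n)$ bits about $j$ per query; and (ii) conditional on non-emptiness, the returned element is close in total variation to a uniform element of $\satisfying{g}$, so it carries only $o(1)$ additional bits about $j$. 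Summing over queries yields $I(J;\tau) = O(t \log\log n)$.

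Finally, Fano's inequality with success probability at least $1-\delta \geq 5/6$ and support size $K = \Theta(n)$ gives $I(J;\tau) = \Omega(\log n)$, and combining with the upper bound produces $t = \tilde{\Omega}(\log n)$, as desired. The main obstacle I anticipate is step (ii): adaptivity lets the algorithm aggregate information across queries and craft $g$ that correlates with $\satisfying{\formula}$; establishing that the sampled element nonetheless reveals only $o(1)$ bits about $j$ will require a careful coupling between $\mathcal{D}_j$ and $\mathcal{D}_{j'}$ conditioned on the transcript. The $\log\log n$ slack in the final bound is plausibly the unavoidable cost of this coupling, corresponding to the natural fluctuation of $|\satisfying{\formula} \cap \satisfying{g}|$ around its expectation.
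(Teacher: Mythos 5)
Your high-level skeleton matches the paper's exactly: Yao's minimax, a hard distribution indexed by a target model count (the paper uses $\ell \in \{\lfloor n^{1/4}\rfloor,\dots,\lceil n^{3/4}\rceil\}$ with $|\satisfying{\formula_\ell}|\approx 2^{n-\ell}$), a per-query mutual-information bound of $O(\log\log n)$, and Fano's inequality. But there are three genuine gaps, and the last one is the place you yourself flag as ``the main obstacle''.

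First, random affine subspaces will not give you the concentration you need. The argument has to union-bound over every node of the (adaptive) query tree, of which there are $2^{O(q^2)}$, so you need failure probability like $n^{-\mathrm{polylog}(n)}$ per event. Affine subspaces are only pairwise independent against arbitrary query sets, and worse, if the algorithm queries an affine set $A$, then $A\cap h^{-1}(0)$ is again affine, so its size is a power of two or zero---the intersection size is \emph{not} concentrated around its mean. The paper instead uses a $k$-wise independent hash family with $k=(\log n)^9$ precisely so that Chernoff-type tails (their Lemma~\ref{lem:k-wise-conc-ine}) hold against \emph{every} set in the tree, regardless of its structure.

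Second, the adaptivity issue you defer is where the technical work actually is, and it is resolved by a structural reduction you do not have: the reduction to \emph{semi-oblivious counters} (the paper's Definition~\ref{def:core} and Lemma~\ref{obs:core}). Since $|\satisfying{\formula}|$ is permutation-invariant and the algorithm only sees samples, a symmetry argument shows that WLOG the algorithm's $i$-th query is specified only by intersection sizes with the atoms of previous query sets plus a subset of previously seen samples; correspondingly, the answer $Z_i$ is not the full $n$-bit assignment but a \emph{configuration} (which atom/previous sample it landed in, or $\bot$), taking only $O(2^i)$ values. Without this reduction, the $n$-bit sample could a priori carry $\Theta(n)$ bits about $\ell$, and bounding $I(J;\tau)$ directly is not tractable.

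Third, your claim (ii)---that conditioned on non-emptiness the returned sample is close in total variation to a uniform element of $\satisfying{g}$---is not what the paper shows, and would be hard to prove in this generality. The paper instead introduces an indicator $Y_i$ for whether the expected intersection $\num{U_i}{\ell}$ lands in the window $[n^{-(\log n)^4}, n^{(\log n)^4}]$, and splits. If $Y_i=1$ for some $i$, this single event pins $\ell$ to a window of width $O((\log n)^5)$, so $H(X\mid\text{transcript})=O(\log\log n)$ already (Part~1 of Lemma~\ref{lem:bound-mutual-inf}). If $Y_i=0$ throughout, concentration from Lemma~\ref{lem:conc-ine} holds (either the intersection is empty or it is within a constant factor of its mean), and a KL-divergence bound against a transcript-measurable auxiliary distribution gives $O(\log q)$ per query (Part~3). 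This case split---not a TV-coupling argument---is what produces the $\log\log n$ slack in the final bound, so your intuition about where the slack comes from is close but the mechanism is different.
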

Although we consider $\epsilon< 1/2$ and $\delta\leq 1/6$ in the above theorem and provide the proof accordingly, our proof works even for any constant $\epsilon, \delta \in (0,1)$. Another thing to remark is that in our proof, we allow even exponential (in the size of the original formula) size formula to be queried in the $\sat$ oracle, making our result stronger than what is claimed in the above theorem.

Let us assume that $\algo$ is an algorithm that $(\epsilon, \delta)$-approximates $|\satisfying{\formula}|$ for any given input $\formula$ (on $n$ variables) by making $q$ queries to a $\sat$ oracle. We will refer to such an algorithm as a $\counter$. We would like to prove a lower bound on $q$. 

The main technical difficulty in proving our lower bound results comes from the enormous power of a $\sat$ oracle compared to an NP oracle. An NP oracle can only provide a YES or NO answer, restricting the number of possible answers (from the NP oracle)  to $2^q$ for a  $q$-query counter with an NP oracle. On the other hand,  since a $\sat$ oracle returns a (random) satisfying assignment (if a satisfying assignment exists),  the number of possible answers can be $2^{nq}$. Further, any counter can be adaptive -- it can choose the next query adaptively based on the previous queries made and their corresponding answers. In general, proving a non-trivial (tight) lower bound for any adaptive algorithm turns out to be one of the notorious challenges, and the difficulty in proving such a lower bound arises in other domains like data structure lower bound, property testing, etc. One of the natural ways to prove any lower bound is to use the information-theoretic technique. However, one of the main challenges in applying such techniques in the adaptive setting is that conditional mutual information terms often involve complicated conditional distributions that are difficult to analyze.



To start with, we argue that we can assume that the $\counter$ is "semi-oblivious" in nature. The number of satisfying assignments of a formula does not change by any permutation of the elements in $\tf^n$, and the $\counter$ can only get elements of $\satisfying{\formula}$ by querying the $\sat$ oracle. So we argue that the only useful information of the $i^{th}$ query set (that is, the set of satisfying assignments of the formula that is given to the $\sat$ oracle) is the size of its intersection with the previous $(i-1)$ query sets and their corresponding answers. We formalize it in Section~\ref{sec:core}. 

We next use Yao's minimax principle to prove a lower bound on the number of queries to a $\sat$ oracle made by a deterministic "Semi-oblivious counter" when the input formula $\formula$ is drawn from a "hard" distribution.


For the hard distribution, 
we construct $O(n^{3/4})$ formulas $\formula_{\ell}$ for each value of  $\ell$ in the set $\{\lfloor n^{1/4}\rfloor,\lfloor n^{1/4}\rfloor+1,\dots,\lceil n^{3/4} \rceil \}$. The formulas  $\formula_{\ell}$ are chosen in such a way that $|\satisfying{\formula_{\ell}}| \approx 2 |\satisfying{\formula_{\ell + 1}}|$ thereby approximately counting the number of satisfying assignments (upto a multiplicative $(1+\epsilon)$-factor for small constant $\epsilon$) reduces to the problem of determining the value of $\ell$. The hard distribution is obtained by picking an  $\ell$ uniformly at random from the set  
$\{\lfloor n^{1/4}\rfloor,\lfloor n^{1/4}\rfloor+1,\dots,\lceil n^{3/4} \rceil \}$ and using the corresponding formula $\formula_{\ell}$. 

Finally, we show the lower bound using information theory. 
 At a high level, we show that the information gained about $\ell$ by the knowledge of obtained samples is small unless we make $\Tilde{\Omega}(\log n)$ oracle calls (Lemma~\ref{lem:bound-mutual-inf-overall}). Then we turn to  Fano's Inequality (Theorem~\ref{thm:fano})
  which links the error probability of a counter to the total information gain. 
 Showing that the information gained by samples is small boils down to showing that the KL-divergence of the conditional distribution over the samples is small for all formulas $\formula_{\ell}$ (shown in the proof of the third part of Lemma~\ref{lem:bound-mutual-inf-overall}). 
 The difficulty in showing the above bound comes from the fact that the samples are adaptive and may not always be concentrated around the expectation. To overcome the above challenge, we first define an indicator random variable $Y_i$ to denote whether, at the $i^{th}$ query, the concentration holds (see the definition in Equation~\ref{eq:defY}). Then we split it into cases: In the first case,  we argue for the situation when concentration may not hold at some step of the algorithm (if $Y_i =1$ for some $i \in [q]$). The second case is when concentration holds (if $Y_i =0$ for all $i \in [q]$). We believe that the technique developed in this paper can be a general tool to show sampling lower bounds in a number of other settings.

\section{Notations and Preliminaries}

For any integer $m$, let $[m]$ denote the set of integers $\{1,2,\ldots,m\}$. For a formula $\formula$ over variable set $\vars(\formula) = \{v_1, \dots, v_n\}$, we denote by $\satisfying{\formula}$ the set of satisfying assignments of $\formula$. If $\formula$ is not satisfiable then $\satisfying{\formula} = \emptyset$. We can interpret $\satisfying{\formula}$ as a subset of $\tf^n$. On the other hand, for any subset $A\subset \tf^n$ we denote by $\queryformula_A$ the formula whose set of satisfying assignments is exactly $A$; that is, $\satisfying{\queryformula_A} = A$.

\paragraph*{Oracles and query model}
In our context of Boolean formulas, an \emph{NP oracle} takes in a Boolean formula $\formula$ as input and returns Yes if $\formula$ is satisfiable (i.e., $\satisfying{\formula} \ne \emptyset$), and No, otherwise. Modern SAT solvers, besides determining whether a given formula is satisfiable or not, also return a satisfying assignment (arbitrarily) if the formula is satisfiable. This naturally motivates us to consider an oracle, namely \emph{$\sat$ oracle}, that takes in a Boolean formula $\formula$ as input and, if $\formula$ is satisfiable, returns a satisfying assignment uniformly at random from the set $\satisfying{\formula}$, and $\bot$, otherwise. 

We rely on the query model introduced by Stockmeyer~\cite{stockmeyer1983complexity}: For a given $\formula$ whose model count we are interested in estimating, one can query the corresponding (NP/SAT) oracle with formulas of the form $\formula \wedge \psi_A$, where, as stated earlier, $\psi_A$ is an (arbitrary) formula whose set of solutions is $A$. We will use $\formula_A $ as a shorthand to represent $\formula \wedge \psi_A$. Throughout this paper, we consider the above query model with query access to the $\sat$ oracle. One call to the $\sat$ oracle will be called a $\sat$ query. By abuse of notation, we sometimes say ``$A$ is queried" to refer to the formula $\formula_A$.




\paragraph*{$k$-wise independent hash functions}
Let $n,m,k$ be positive integers and let $H(n,m,k)$ denote the family of \emph{$k$-wise independent hash functions} from $\tf^n$ to $\tf^m$. For any $\alpha \in \tf^m$, and $h \in H(n,m,k)$, let $h^{-1}(\alpha)$ denote the set $\{s \in \tf^n \mid h(s) = \alpha\}$.

It is well-known (e.g., see~\cite{CLRbook}) that for any integer $n,m,k$, one can generate an explicit family of $k$-wise independent hash functions in time and space $poly(n,m,k)$. Moreover, for any $\alpha \in \tf^m$, $h^{-1}(\alpha)$ (where $h \in H(n,m,k)$) can be specified by a Boolean formula of size $poly(n,m,k)$.


\paragraph*{Concentration inequalities for limited independence}

\begin{lemma}~\cite{schmidt1995chernoff}
\label{lem:k-wise-conc-ine}
    If $X$ is a sum of $k$-wise independent random variables, each of which is confined to $[0,1]$ with $\mu =\E[X]$ then
    \begin{enumerate}
        \item For any $\gamma \le 1$ and $k \ge \gamma^2 \mu$, $\Pr[|X - \mu| \ge \gamma \mu] \le exp(-\gamma^2 \mu/3)$.
        \item For any $\gamma \ge 1$ and $k \ge \gamma \mu$, $\Pr[|X - \mu| \ge \gamma \mu] \le exp(-\gamma \mu/3)$.
    \end{enumerate}
\end{lemma}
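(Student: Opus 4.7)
The plan is a standard $K$-th moment argument. Write $Y_i := X_i - \mathbb{E} X_i$, so that $|Y_i|\le 1$, $\mathbb{E}[Y_i]=0$, and the $Y_i$'s inherit $k$-wise independence from the $X_i$'s. For a suitably chosen even integer $K \le k$, I would bound the central moment $\mathbb{E}[(X-\mu)^K]$ and then invoke Markov's inequality in the form
\[
\Pr\bigl[|X-\mu| \ge \gamma\mu\bigr] \;\le\; \frac{\mathbb{E}[(X-\mu)^K]}{(\gamma\mu)^K},
\]
tuning $K$ to the regime of $\gamma$. The observation that makes this go through under mere $k$-wise independence is that every monomial appearing in the multinomial expansion of $(\sum_i Y_i)^K$ involves at most $K \le k$ distinct indices, so $k$-wise independence already suffices to factorize the expectation of each such monomial into a product of single-variable expectations.

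First, I expand by the multinomial theorem and factorize via $k$-wise independence. Because $\mathbb{E} Y_i = 0$, only tuples $(k_1,\ldots,k_n)$ with every nonzero $k_j \ge 2$ contribute, so the support size $s$ satisfies $s \le K/2$. Using $|Y_i|\le 1$ to get $|\mathbb{E}[Y_i^{k_j}]| \le \mathbb{E}[Y_i^2] \le \mathbb{E} X_i =: p_i$, and grouping tuples by support size, standard combinatorial bookkeeping (bounding the multinomial coefficients, using $\sum_i p_i = \mu$ together with the Maclaurin bound $e_s(p_1,\ldots,p_n)\le \mu^s/s!$, and applying Stirling to $s!$) yields
\[
\mathbb{E}[(X-\mu)^K] \;\le\; C \cdot \bigl(\max\{K\mu,\, K^2\}\bigr)^{K/2}
\]
for an absolute constant $C$; this is essentially the Schmidt-Siegel-Srinivasan moment inequality. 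Combining with Markov,
\[
\Pr\bigl[|X-\mu|\ge \gamma\mu\bigr] \;\le\; C \cdot \Bigl(\max\Bigl\{\tfrac{K}{\gamma^2\mu},\, \tfrac{K^2}{\gamma^2\mu^2}\Bigr\}\Bigr)^{K/2}.
\]

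Second, I would optimize $K$ in each regime. For part (1), where $\gamma \le 1$, I take $K$ to be the largest even integer $\le c_1 \gamma^2\mu$ for a small absolute constant $c_1 < 1$; the hypothesis $k \ge \gamma^2\mu \ge K$ licenses this choice. Here $K \le \mu$, so the $K\mu$ term dominates inside the max, the bracketed quantity equals a constant strictly less than $1$, and the bound becomes $\exp(-\Theta(K)) = \exp(-\Theta(\gamma^2\mu))$. For part (2), where $\gamma \ge 1$, I take $K$ to be the largest even integer $\le c_2 \gamma\mu$; now $K$ can exceed $\mu$ so the $K^2$ term may dominate, but $(K/(\gamma\mu))^2$ is still a constant strictly less than $1$, yielding $\exp(-\Theta(\gamma\mu))$. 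Tuning $c_1$ and $c_2$ pins down the constant in the exponent to $1/3$.

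The main obstacle is the combinatorial moment bound itself: writing $\mathbb{E}[(X-\mu)^K]$ as a multinomial sum is easy, but converting it into the clean form $\max\{K\mu, K^2\}^{K/2}$ requires a careful case split depending on whether $K \lessgtr \mu$, a delicate summation over the support size $s$, and Stirling-type estimates to control both $s!$ and the number of compositions of $K$ into $s$ parts of size $\ge 2$. The remaining pieces — the centering argument, the factorization of monomial expectations under $k$-wise independence, the Markov step, and the optimization of $K$ — are routine; only arranging the implicit constants so the exponent lands exactly at $1/3$ requires a somewhat more refined version of the same estimate rather than any genuinely new idea.
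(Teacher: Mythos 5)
The paper never proves this statement---it is imported verbatim from \cite{schmidt1995chernoff}---so the only question is whether your sketch would actually establish it. Your overall plan (center, expand $(X-\mu)^K$, use that every monomial involves at most $K\le k$ distinct indices so the $K$-th central moment coincides with its value under full independence, then Markov and optimize $K$) is the classical Bellare--Rompel power-moment route, and the factorization observation is correct. The genuine gap is quantitative, and it sits exactly at the step you wave off: ``tuning $c_1$ and $c_2$ pins down the constant in the exponent to $1/3$.'' From the moment bound you state, $\E[(X-\mu)^K]\le C\bigl(\max\{K\mu,K^2\}\bigr)^{K/2}$, Markov gives in part (1) (where $K\le \gamma^2\mu\le\mu$, so $K\mu$ dominates) the bound $C\,(K/(\gamma^2\mu))^{K/2}=C\exp\bigl(-\tfrac{K}{2}\ln\tfrac{\gamma^2\mu}{K}\bigr)$; optimizing over $K\le\gamma^2\mu$ yields $K=\gamma^2\mu/e$ and an exponent of $\gamma^2\mu/(2e)\approx 0.18\,\gamma^2\mu$, which is strictly worse than $\gamma^2\mu/3$ for \emph{every} choice of $c_1$. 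The same failure occurs in part (2) for $\gamma$ near $1$ (there $K=c_2\gamma\mu<\mu$, so again the $K\mu$ term dominates and the best achievable rate is $1/(2e)$). To reach the Chernoff-matching constant $1/3$ with $k$ only as large as $\gamma^2\mu$ (resp.\ $\gamma\mu$) you need the extra $e^{-\Theta(K)}$ savings present in the true central moments (Gaussian-type $(K-1)!!\,\sigma^{K}$ behaviour), i.e.\ a strictly sharper moment inequality than the one you state---and proving that sharper inequality with explicit constants is the real content, not routine constant-chasing.

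It is also worth noting that the cited proof does not go through power moments at all: Schmidt--Siegel--Srinivasan obtain the Chernoff-strength constants by working with elementary symmetric polynomials (binomial moments), whose expectations under $k$-wise independence agree exactly with the fully independent case, and bounding tail probabilities by ratios of the form $\E[S_k]/\binom{a}{k}$; your intermediate inequality is essentially the (lossier) Bellare--Rompel estimate. That said, for this paper's purposes the distinction is harmless: Lemma~\ref{lem:conc-ine} only invokes the statement with $\gamma=1/2$ and with $\gamma\ge 1$ to get quasipolynomially small failure probabilities, so any bound of the form $\exp(-\Omega(\gamma^2\mu))$, which your route does deliver once the combinatorial bookkeeping is done, would suffice for the application---it just would not prove the lemma with the stated constant $1/3$ under the stated thresholds on $k$.
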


\paragraph*{Basics of information theory}
Let $X$ and $Y$ be two random variables over the space $\mathcal{X} \times \mathcal{Y}$. 
The mutual information $I(X;Y)$ between random variables $X$ and $Y$ is the reduction in the entropy of $X$ given $Y$ and hence
\begin{equation}\label{eq:mutual-inf-ub-entropy}
I(X;Y) = H(X) - H(X|Y) \le H(X)
\end{equation}
where $H(X) = -\sum_{x \in \mathcal{X}} \Pr[X=x] \log \Pr[X=x]$ is the Shannon entropy of $X$ and $H(X|Y)$ is the conditional entropy of $X$ given $Y$.

The \emph{Kullback–Leibler divergence} or simply \emph{KL divergence} (also called {relative entropy}) between two  discrete probability distributions $P$ and $Q$ defined on same probability space $\mathcal{X}$ is given by :
\[
KL(P||Q) := \sum_{x \in \mathcal{X}}p(x) \log \frac{p(x)}{q(x)}
\]
where $p$ and $q$ are probability mass  functions of $P$ and $Q$ respectively. 

If the joint distribution of $X$ and $Y$ is $Q_{X,Y}$ and marginal distributions $Q_X$ and $Q_Y$ respectively, then the mutual information $I(X;Y)$ can also be equivalently defined as:
\[
I(X;Y) := KL(Q_{X,Y}||Q_X \times Q_Y).
\]

For three random variables $X,Y,Z$, the \emph{conditional mutual information} $I(X;Y|Z)$ is defined as 
\begin{align*}
I(X;Y|Z) :=  \mathbb{E}_Z[KL(Q_{(X,Y)|Z}||Q_{X|Z} \times  Q_{Y|Z})]. 
\end{align*}

For any three random variables $X,Y,Z$, the \emph{chain rule} for mutual information says that
\[
I(X;(Y,Z)) = I(X;Y)+I(X;Z|Y).
\]

If $Z$ is a discrete random variable taking values in $\mathcal{Z}$ then we have 
\begin{align*}
&\mathbb{E}_Z[KL(Q_{(X,Y)|Z}||Q_{X|Z} \times  Q_{Y|Z})] = \sum_{z \in \mathcal{Z}} Q_Z(z) \cdot  KL(Q_{(X,Y)|Z=z}||Q_{X|Z=z} \times Q_{Y|Z=z})\\
& = \sum_{z \in \mathcal{Z}} Q_Z(z) \cdot  I(X;Y|Z=z).
\end{align*}

\begin{lemma}[\cite{scarlett2019introductory}]
\label{lem:kl}
Let $P_X,P_Z,P_{Z|X}$ be the marginal distributions corresponding to a pair $(X,Z)$, where $X$ is discrete. For any auxiliary distribution $Q_Z$, we have 
\[
I(X,Z) = \sum_{x} P_X(x) KL(P_{Z|X}(\cdot|x)||P_Z) \le \max_{x} KL(P_{Z|X}(\cdot|x) || Q_Z).
\]
\end{lemma}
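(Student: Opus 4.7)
The plan is to establish the two parts of the statement in turn, starting with the equality. Since $I(X;Z) := KL(P_{X,Z}\,\|\,P_X \times P_Z)$ was defined in the preliminaries, I would expand it as
\[
I(X;Z) = \sum_{x,z} P_{X,Z}(x,z) \log \frac{P_{X,Z}(x,z)}{P_X(x)\, P_Z(z)},
\]
factor $P_{X,Z}(x,z) = P_X(x)\, P_{Z|X}(z|x)$, pull $P_X(x)$ outside the inner sum over $z$, and recognize what remains as $KL(P_{Z|X}(\cdot|x)\,\|\,P_Z)$. This yields the claimed identity $I(X;Z) = \sum_x P_X(x)\, KL(P_{Z|X}(\cdot|x)\,\|\,P_Z)$.

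For the inequality, the key step is to prove the ``golden formula'': for any auxiliary distribution $Q_Z$,
\[
\sum_x P_X(x)\, KL(P_{Z|X}(\cdot|x)\,\|\,Q_Z) \;=\; I(X;Z) + KL(P_Z\,\|\,Q_Z).
\]
I would derive this by splitting the logarithm as $\log \frac{P_{Z|X}(z|x)}{Q_Z(z)} = \log \frac{P_{Z|X}(z|x)}{P_Z(z)} + \log \frac{P_Z(z)}{Q_Z(z)}$ inside the double sum. The first piece reassembles into $I(X;Z)$ by the equality above, while the second piece reduces to $KL(P_Z\,\|\,Q_Z)$ after invoking the marginalization identity $\sum_x P_X(x)\, P_{Z|X}(z|x) = P_Z(z)$, since the integrand in the second piece depends on $x$ only through $P_X(x) P_{Z|X}(z|x)$.

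Combining these ingredients gives the inequality: by Gibbs' inequality $KL(P_Z\,\|\,Q_Z) \ge 0$, and therefore $I(X;Z) \le \sum_x P_X(x)\, KL(P_{Z|X}(\cdot|x)\,\|\,Q_Z)$. Since a convex combination of nonnegative quantities weighted by the probability mass function $P_X$ is bounded above by the maximum of those quantities, the right-hand side is at most $\max_x KL(P_{Z|X}(\cdot|x)\,\|\,Q_Z)$, completing the proof.

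There is no real obstacle here; the lemma is a standard information-theoretic fact whose proof is short algebraic manipulation. The only point requiring minor care is the correct application of the marginalization identity when isolating the $KL(P_Z\,\|\,Q_Z)$ term, and an implicit assumption that $Q_Z$ dominates $P_Z$ on the relevant supports (otherwise the right-hand side is $+\infty$ and the inequality is trivially true).
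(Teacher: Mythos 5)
The paper states this lemma with a citation to Scarlett--Cevher and supplies no proof of its own, so there is no in-paper argument to compare against; your proof is the standard one and is correct. You expand $I(X;Z)$ as the KL divergence between the joint and the product of marginals, factor $P_{X,Z}=P_X P_{Z|X}$ to obtain the equality, then prove the ``golden formula'' $\sum_x P_X(x)\,KL(P_{Z|X}(\cdot|x)\,\|\,Q_Z) = I(X;Z) + KL(P_Z\,\|\,Q_Z)$ by splitting the logarithm and using $\sum_x P_X(x)P_{Z|X}(z|x)=P_Z(z)$, and conclude via nonnegativity of KL and the fact that a $P_X$-weighted average is at most the maximum. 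All steps are sound, and your remark about $Q_Z$ needing to dominate $P_Z$ on the support (otherwise the bound is vacuously $+\infty$) is the right caveat; this is exactly the derivation the cited reference gives.
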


 \begin{theorem}[Fano's inequality]
 \label{thm:fano}
  Consider discrete random variables $X$ and $\hat{X}$, both taking values in $\mathcal{V}$. Then 
  \[
  \Pr[\hat{X} \neq X] \ge 1 - \frac{I(X;\hat{X})+\log 2}{\log |\mathcal{V}|}.
  \]
 \end{theorem}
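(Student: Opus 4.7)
The plan is to prove Fano's inequality via the standard two-way chain rule argument on a joint conditional entropy, using an auxiliary binary indicator of the error event. This reduces the problem to bounding two clean pieces: the entropy of a Bernoulli variable and a residual uncertainty term that scales with the error probability.

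First, I would introduce the error indicator $E := \mathbf{1}[X \neq \hat{X}]$ and set $P_e := \Pr[X \neq \hat{X}]$. I would then expand $H(E, X \mid \hat{X})$ in two ways via the chain rule for conditional entropy. On one side, $H(E, X \mid \hat{X}) = H(X \mid \hat{X}) + H(E \mid X, \hat{X}) = H(X \mid \hat{X})$, because $E$ is a deterministic function of $(X, \hat{X})$, so the second summand vanishes. On the other side, using that conditioning does not increase entropy and that $E$ is binary, $H(E, X \mid \hat{X}) = H(E \mid \hat{X}) + H(X \mid E, \hat{X}) \le H(E) + H(X \mid E, \hat{X}) \le \log 2 + H(X \mid E, \hat{X})$.

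Second, I would bound the residual term $H(X \mid E, \hat{X})$ by splitting on the value of $E$. When $E = 0$, $X = \hat{X}$ deterministically, so $H(X \mid E = 0, \hat{X}) = 0$. When $E = 1$, the variable $X$ ranges over at most $|\mathcal{V}| - 1$ values for each fixed $\hat{x}$, yielding $H(X \mid E = 1, \hat{X}) \le \log(|\mathcal{V}| - 1) \le \log |\mathcal{V}|$. Averaging over $E$ gives $H(X \mid E, \hat{X}) \le P_e \log |\mathcal{V}|$. Combining the two expansions yields the intermediate bound $H(X \mid \hat{X}) \le \log 2 + P_e \log |\mathcal{V}|$.

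Third, I would convert this into a lower bound on $P_e$ via the identity $H(X \mid \hat{X}) = H(X) - I(X; \hat{X})$, which rearranges to $P_e \ge (H(X) - I(X; \hat{X}) - \log 2)/\log |\mathcal{V}|$. The stated form with a leading $1$ follows once we use $H(X) = \log |\mathcal{V}|$, so the only real subtlety (and the main obstacle) is that the theorem as written implicitly assumes $X$ is uniform on $\mathcal{V}$. In the downstream application to Theorem~\ref{thm:main2}, the random index $\ell$ is chosen uniformly from a set of size $\Theta(n^{3/4})$, so this hypothesis is satisfied; I would flag the uniformity assumption explicitly, and note that without it the cleaner statement replaces $\log |\mathcal{V}|$ in the numerator by $H(X)$.
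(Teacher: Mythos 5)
Your argument is the canonical proof of Fano's inequality (error indicator $E=\mathbf{1}[X\neq\hat{X}]$, two chain-rule expansions of $H(E,X\mid\hat{X})$, the split on $E$, then $H(X\mid\hat{X})=H(X)-I(X;\hat{X})$), and every step checks out; note that the paper itself states Theorem~\ref{thm:fano} as a standard result without proof, so there is no in-paper argument to compare against. Your one substantive observation is also correct and worth keeping: the inequality in the form stated, with $\log|\mathcal{V}|$ rather than $H(X)$ in the leading term, requires $H(X)=\log|\mathcal{V}|$, i.e.\ $X$ uniform on $\mathcal{V}$ (otherwise a constant $X$ with $\hat{X}=X$ gives a counterexample for $|\mathcal{V}|\ge 3$). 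That hypothesis is satisfied where the paper invokes the theorem, since $X$ is drawn uniformly from $\{\lfloor n^{1/4}\rfloor,\dots,\lceil n^{3/4}\rceil\}$, so your proof fully supports the paper's use of the statement.
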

 
 Consider the random variables $X, Z,\hat{X}$. 
 If the random variable $\hat{X}$ depends only on $Z$ and is conditionally independent on $X$, then we have \begin{equation}\label{eq:dataprocessing}
     I(X;\hat{X}) \le I(X;Z).
 \end{equation} This inequality is known as the \emph{data processing inequality}. For further exposition, readers may refer to any standard textbook on information theory (e.g.,~\cite{CT06}).

\paragraph*{Minimax theorem}
Yao's minimax principle~\cite{yao1977probabilistic} is a standard tool to show lower bounds on the worst-case performance of randomized algorithms. 
Roughly speaking, it says that to show a lower bound on the performance of a randomized algorithm $R$, it is sufficient to show a lower bound on any deterministic algorithm when the instance is randomly drawn from some distribution.

Consider a problem over a set of inputs $\mathcal{X}$. Let $\Gamma$ be some probability distribution over $\mathcal{X}$ and let $X\in \mathcal{X}$ be an input chosen as per $\Gamma$.  
Any randomized algorithm $R$ is essentially a probability distribution over the set of deterministic algorithms, say $\mathcal{T}$. 
By Yao's minimax principle,
\[
\max_{X \in \mathcal{X}} \Pr[R \thinspace \text{gives wrong answer on} \thinspace  X]   \ge  \min_{T \in \mathcal{T}} \Pr_{X \sim \Gamma}[ T \thinspace \text{gives wrong answer on} \thinspace  X].
\]


\section{Lower Bound on the number of queries to $\sat$ oracle}\label{sec:lbproof}

In this section, we will prove Theorem~\ref{thm:main2}, which implies Theorem~\ref{thm:main}.  Let $\algo$ be an adaptive randomized algorithm that given as input $\formula$ over $n$ variables $\vars = \{v_1, \dots, v_n\}$ and output $\est$ that is an $(\epsilon, \delta)$-approximation of $\satisfying{\formula}$. The only way $\algo$ accesses the input $\formula$ is by making queries to the $\sat$ oracle, that is, obtaining random satisfying assignments from $\satisfying{\formula_A}$, where $\formula_A = \formula\wedge \queryformula_A$.  We will prove that $\algo$ has to make at least $\tilde{\Omega}(\log n)$ such queries to the $\sat$ oracle. 

We will start by arguing that we can assume that the adaptive algorithm $\algo$ has some more structure. In particular, in Section~\ref{sec:core}, we will argue (in the same lines as in \cite{CFGM16}) that we can assume $\algo$ is a \emph{semi-oblivious counter} (Definition~\ref{def:core}). 

We use Yao's Min-max technique to argue that obtaining a lower bound on a (randomized) semi-oblivious counter is the same as obtaining a lower bound on a (deterministic) semi-oblivious counter when the input is drawn from the worst possible distribution over the set of formulas on $n$ variables. In Section~\ref{sec:hardinstance}, we present the ``hard" distribution that would help us prove the lower bound against any deterministic semi-oblivious counter. In Section~\ref{sec:hardproperties}, we present some properties of the hard instance that would be used for the final lower-bound proof. 

Finally, in Section~\ref{sec:mainproof}, we will use an information-theoretic argument to give a lower bound on the query complexity of any deterministic semi-oblivious counter and hence prove Theorem~\ref{thm:main2}.

\paragraph*{A note on the use of auxiliary variables in the queries to the  $\sat$ oracle} One thing we observe is that our lower bound proof does not assume that in the input formula $\formula$, all the variables are influential. In other words, we can assume that $\formula$ is on $n$ variables; the actual number of variables in $\formula$ may be significantly less. All we need for our lower bound proofs to go through is that the queries to the $\sat$ oracle made by the algorithm are to $\formula \wedge \queryformula_A$ where the $\queryformula$ is a formula over $n$ variables. And the lower bound on the query complexity that we prove (Theorem~\ref{thm:main2}) is $\tilde{O}(\log n)$. Hence, as long as the number of variables used in the queries to the $\sat$ oracle is at most polynomial in the actual number of variables in the input formula $\formula$, our lower bound holds.

\subsection{Semi-oblivious counter}\label{sec:core}

Suppose given a formula $\formula$ over $n$ variables, a counter $\algo$ makes $q$ calls to the $\sat$ oracle with queried formulas $\formula_{A_1},\cdots,\formula_{A_q}$ respectively, where each $A_i \subseteq \tf^n$. (Recall, $\formula_{A_i}=\formula \wedge \psi_{A_i}$, where $\psi_{A_i}$ denote the formula having $\satisfying{\psi_{A_i}}=A_i$.) Note, the $i$-th $\sat$ oracle call by the counter $\algo$ is specified by the set $A_i$. During the $i$-th call (for $1\le i \le q$), suppose the counter $\algo$ receives a sample $s_i \in A_i \cup \{\bot\}$. Note that the oracle calls made by $\algo$ can be \textit{adaptive}, i.e., the sets  $A_1,\cdots, A_q$ are not fixed in advance -- the counter $\algo$ fixes $A_i$ only after seeing the samples $s_1,\cdots,s_{i-1}$ (outcomes of all the previous oracle calls). 

We now define a special type of randomized $\counter$, referred to as \emph{semi-oblivious counter}, which at any point of time queries  the $\sat$ oracle only by looking into the configuration of the previous step. We will later argue that to prove a query lower bound for general $\counters$, it suffices to consider semi-oblivious counters. In other words, semi-oblivious counters are as "powerful" as general $\counters$. 

We first provide intuition for \emph{semi-oblivious counter}.
Note that permuting the variables of any formula $\formula$ permutes the set of satisfying assignments $\satisfying{\formula}$ but $|\satisfying{\formula}|$ is unchanged. Since a $\counter$ needs to determine $|\satisfying{\formula}|$ only (not $\satisfying{\formula}$),  the final output by the $\counter$, in some sense, should be based only on the relations between the samples and the query sets (not on their actual values). 
Before providing a formal definition, let us first introduce some terminology.

Given a family of sets $\mathcal{A} = \{A_1,\cdots,A_i\},$ (where $A_i \subseteq \tf^n$), the \emph{atoms} generated by $\mathcal{A}$, denoted by $\At(\mathcal{A})$, are (at most) $2^i$ distinct sets of the form $\cap_{j=1}^{i}C_j$ where $C_j \in \{A_j,\tf^n\setminus A_j\}$.  For example, if $i =2$, then $\At(A_1, A_2) = \{A_1\cap A_2, A_1\setminus A_2, A_2\setminus A_1, (A_1\cup A_2)^c\}$.





\begin{definition} \label{def:core}
    (Semi-oblivious counter): A semi-oblivious counter is a randomized algorithm $T$ that, given any formula $\formula$, at any step $i$, works in the following three phases: 
    \begin{itemize}
        \item \textbf{Semi-oblivious choice:} Let $\mathcal{A}_{i-1}=\{A_1,\cdots,A_{i-1}\}$, $S_{i-1}=\{s_1,\cdots,s_{i-1}\}$, $C_{i-1}=\{c_1,\cdots,c_{i-1}\}$ be the set of first $i-1$ query sets, the set of first $i-1$ samples obtained, the set of first $i-1$ configurations, respectively. 
        Only based on $C_{i-1}$ (without knowing the set $S_{i-1}$), $T$ does the following:
        \begin{itemize}
            \item For each $A \in \At(\mathcal{A}_{i-1})$, it generates an integer $k^A_i$ between $0$ and $|A \setminus S_{i-1}|$. ($k^A_i$ indicates how many unseen elements from the atom $A$ of the previous query sets are to be included in the next query set.)
            \item It chooses a set of indices $K_i \subseteq \{1,\cdots,i-1\}$. ($K_i$ specifies the index set of previous samples that are to be included in the next query set.)
        \end{itemize}
        \item \textbf{Query set generation:} In this phase, it decides the query set $A_i$ as follows:
        \begin{itemize}
            \item Let us define the \emph{candidate unseen set family} as 
            \[
            \mathcal{U}_i := \{ U \subseteq \tf^n\setminus S_{i-1} \mid \forall A \in \At(\mathcal{A}_{i-1}), |U_i \cap A| = k^A_i \}.
            \]
            The algorithm $T$ chooses a set $U_i$ uniformly at random from the candidate unseen set family $\mathcal{U}_{i-1}$.

            \item Let us denote $O_i:= \{s_j \mid j \in K_i\}$. The algorithm $T$ decides the query set to be $A_i = U_i \cup O_i$.
        \end{itemize}
        \item \textbf{Oracle call:} It places a query to the $\sat$ oracle with the formula $\formula_{A_i}$. Let the $i$-th \emph{configuration} $c_i$ specify whether $s_i = \bot$, or for which $j \in K_i$, $s_i=s_j$, or for which $A \in \At(\mathcal{A}_{i-1})$, $s_i \in A\cap U_i$.
    \end{itemize}
    In the end (after placing $q=q(n)$ $\sat$ oracle calls), depending on the set of all the configurations $C_q$, $T$ outputs an estimate on the $|\satisfying{\formula}|$.
\end{definition}

From now on, for brevity, we use $\At(U_i)$ to denote the set $\{U_i \cap A \mid A \in \At(\mathcal{A}_{i-1})\}$. Next, we show that if there exists a general $\counter$, then there also exists a semi-oblivious counter. The proof is inspired by the argument used in~\cite{CFGM16} and is given in Appendix~\ref{sec:oblivious-equivalence}. 

\begin{lemma}\label{obs:core}
If there is an algorithm that, given any input $\formula$ on $n$ variables, outputs an $(\epsilon, \delta)$-approximation of $|\satisfying{\formula}|$ while placing at most $q=q(n)$ $\sat$ oracle calls, then there also exists a (randomized) semi-oblivious counter that, given input $\formula$, outputs an $(\epsilon, \delta)$-approximation of $|\satisfying{\formula}|$ while also placing at most $q$ $\sat$ oracle calls.
\end{lemma}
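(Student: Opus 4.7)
My plan is a symmetrization argument. Given a randomized $q$-query counter $\algo$, I would define a semi-oblivious counter $T$ as follows. On input $\formula$, $T$ first samples a uniformly random permutation $\pi$ of $\tf^n$ and then internally simulates $\algo$ on the virtually relabeled formula $\formula'$, defined by $\satisfying{\formula'} := \pi(\satisfying{\formula})$. Whenever $\algo$ wishes to issue the query $\formula' \wedge \psi_{A'_i}$ to the $\sat$ oracle, $T$ instead submits $\formula_{\pi^{-1}(A'_i)}$ to its own oracle, and upon receiving the sample $s_i$ it relays $\pi(s_i)$ back to $\algo$; at the end, $T$ outputs whatever $\algo$ outputs. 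Correctness and the query bound are then immediate: conditional on $\algo$'s simulated view, each $\pi(s_i)$ is uniform over $\satisfying{\formula'} \cap A'_i$ because $\pi$ is a uniformly random bijection, so from $\algo$'s perspective the execution is indistinguishable from a real run on $\formula'$; by assumption $\algo$'s output then $(\epsilon,\delta)$-approximates $|\satisfying{\formula'}| = |\satisfying{\formula}|$, and the simulation uses exactly $q$ oracle calls on $\formula$.

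The main technical step will be to recast $T$ in the three-phase form of Definition~\ref{def:core}. Fix a step $i$, and write $\algo$'s intended next query as $A'_i = O'_i \sqcup U'_i$, where $O'_i = \{s'_j : j \in K_i\}$ consists of previously seen samples (indexed by some $K_i \subseteq [i-1]$) and $U'_i$ of fresh elements. The key claim, proved by induction on $i$, is that conditional on the configurations $c_1, \ldots, c_{i-1}$, the part of $\pi$ acting on the not-yet-revealed elements of $\tf^n$ is still uniform over the permutations consistent with these configurations -- a standard principle-of-deferred-decisions argument. Consequently, $U_i := \pi^{-1}(U'_i)$ is uniform over the candidate unseen family $\mathcal{U}_i$ with atom sizes $k^A_i := |U_i \cap A|$ for $A \in \At(\mathcal{A}_{i-1})$, and both the tuple $(k^A_i)_A$ and the index set $K_i$ are functions only of $\algo$'s internal view, which (after $T$ internally fixes $\algo$'s coin flips) is itself determined by $c_1, \ldots, c_{i-1}$. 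Reading this off recovers the procedure of Definition~\ref{def:core} exactly: choose $(k^A_i, K_i)$ from the configurations together with $\algo$'s simulated coins, then draw $U_i$ uniformly from $\mathcal{U}_i$, then query $A_i := U_i \cup O_i$.

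The main obstacle will be making this deferred-decisions induction precise: one has to verify that the configuration sequence $c_1, \ldots, c_{i-1}$ captures \emph{exactly} the information about $\pi$ revealed so far, nothing more and nothing less. Adaptivity makes this subtle because a new query can mix previously seen samples with fresh ones, and $c_i$ must simultaneously record $\bot$-outcomes, collisions $s_i = s_j$ with earlier samples, and the atom in $\At(\mathcal{A}_{i-1})$ into which a fresh sample falls -- which are exactly the three pieces of data appearing in Definition~\ref{def:core}. Once this invariant is in hand, the translation of $T$ into a semi-oblivious counter reduces to re-reading the simulation, and the bound on query count and error probability transfers with no loss, proving the lemma.
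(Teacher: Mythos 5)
Your proposal takes a genuinely different route from the paper's. The paper's proof (Appendix A) works by defining a notion of a ``good strategy'' for the $i$-th query given a realized history $\{(A_1,s_1),\dots,(A_{i-1},s_{i-1})\}$, and then proving a per-step exchangeability claim: if $A_i$ is a good strategy, then so is any $A'_i$ that has the same intersection pattern with each atom of $\At(A_1,\dots,A_{i-1})$ and the same intersection with $\{s_1,\dots,s_{i-1}\}$. The claim is proved by exhibiting a permutation $\sigma$ of $\tf^n$ that fixes the atoms and the seen samples and maps $A_i$ to $A'_i$, and noting that $|\satisfying{\formula}|$ is invariant under $\sigma$. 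The conclusion that the whole counter can be made semi-oblivious is then asserted rather tersely. Your proof is instead a direct reduction: you \emph{construct} a semi-oblivious counter $T$ from $\algo$ by internally sampling a uniformly random permutation $\pi$, relabeling queries and relayed samples through $\pi$, and then invoking a deferred-decisions invariant to recast $T$ in the three-phase form of Definition~\ref{def:core}. Both proofs are ultimately powered by the same symmetry ($|\satisfying{\formula}|$ is permutation-invariant), but yours is more explicitly operational, which arguably makes the ``how the per-step claim yields the lemma'' step (including the final output phase) more transparent; the cost is that you now owe the deferred-decisions invariant, which you correctly flag as the main thing to make precise.

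One imprecision you should fix: you write that $(k^A_i)_A$ and $K_i$ ``are functions only of $\algo$'s internal view, which (after $T$ internally fixes $\algo$'s coin flips) is itself determined by $c_1,\ldots,c_{i-1}$.'' That last clause is false. $\algo$'s view at step $i$ consists of the actual labels $\pi(s_1),\ldots,\pi(s_{i-1})$ (and the queries $\algo$ has issued), and these are \emph{not} determined by the configurations even after fixing $\algo$'s coins --- they depend on the unrevealed part of $\pi$. Consequently $(k^A_i)_A$ and $K_i$ need not be deterministic functions of the configurations. What is true, and what your deferred-decisions argument actually yields, is that conditional on the configurations (and $\algo$'s coins) the residual uncertainty in $\pi$ induces a distribution over $\algo$'s views, hence a distribution over $(\bm{k}_i,K_i)$, and conditionally on that choice $U_i=\pi^{-1}(U'_i)$ is uniform over the corresponding $\mathcal{U}_i$ (since $\pi$ restricted to each atom, modulo the pinned-down $s_j$'s, remains a uniformly random bijection). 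Since Definition~\ref{def:core} permits the semi-oblivious counter to \emph{generate} $(\bm{k}_i,K_i)$ randomly, the construction goes through; you just need to route the leftover randomness of $\pi$ into the semi-oblivious choice phase rather than claim determinism. With that correction, the argument is sound and matches the lemma's conclusion, including the query bound $q$ and the $(\epsilon,\delta)$-accuracy via $|\satisfying{\formula'}|=|\satisfying{\formula}|$.
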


Suppose all the internal randomness of a semi-oblivious counter is fixed. (Since in the proof of Theorem~\ref{thm:main}, we will first apply Yao's minimax principle, it suffices  to only consider deterministic decision trees.) Then, a semi-oblivious counter $T$ can be fully described by a decision tree $R$ where the path from the root to any node $v$ at depth $i$ (more precisely, the edges of this path) corresponds to the configuration of the first $i-1$ samples. Note that fixing the configurations of the samples till $i-1$ queries (and the internal randomness) fixes the size of an atom $A \in At(A_1,\cdots,A_i)$ (and hence of each $A_j$ for $j \le i$). 
Formally,
\begin{enumerate}
    \item[(i)] A path (from root) to any node $v$ at depth $i$ is associated with a sequence of query sets $ \mathcal{A}_{i-1}=(A_1,\cdots,A_{i-1})$ such that the sizes of all atoms $A \in  \At(\mathcal{A}_{i-1})$ are fixed.
    \item[(ii)] The node $v$ is labeled by a vector $\bm{k}_v = (k^A_i)_{A \in \At(\mathcal{A}_{i-1})}$ and a set $K_v \subseteq [i-1]$ which are used to determine the next query set $A_i = O_i \cup U_i$. (Again, $|U_i| = \sum_{A \in \At(\mathcal{A}_{i-1})} k^A_i$ and the set $U_i$ is fixed.) $A_i$ is used to place the next $\sat$ oracle call.
    \item[(iii)] For every possible value of the configuration at step $i$, there is a corresponding child of the node $v$, with the corresponding edge labeled by the value of the configuration. 
\end{enumerate}

For any node $v$, we use $A_v = O_v \cup U_v$ to denote the (random) query set (corresponding to the node $v$) determined by the  $\bm{k}_v$ and $K_v$. Note that $|U_v| = \sum_{A \in \At(\mathcal{A}_{i-1})} k^A_i$. Further, we use $\mathcal{A}_{v} := (A_1,\cdots,A_v)$ for the sequence of query sets corresponding to a path to $v$ and node $v$. Observe the number of possible outcomes of the counter $T$ at any step $i$ is at most $i + 2^i +1 \le 2^{q+1}$ (since $i \le q$). So the total number of nodes in the decision tree corresponding to the semi-oblivious counter $T$ is at most $2^{O(q^2)}$.

\subsection{Hard instance} \label{sec:hardinstance}
We will provide a set of inputs $\mathcal{X}$ (which, in our case, will be a set of formulas) and a distribution $\Gamma$ over $\mathcal{X}$. Then we will show that any deterministic semi-oblivious counter $D$ (note that $D$ knows $\mathcal{X}$ and $\Gamma$) which receives as input a formula $\formula\in \mathcal{X}$ randomly drawn as per distribution $\Gamma$ and  returns an $(\epsilon,\delta)$-approximation of $\satisfying{\formula}$, must make $\tilde{\Omega}(\log n)$  queries to the $SAT$-oracle.

Let $k = (\log n)^9$. Let $\mathcal{X}$ be the set of all formulas (with $n$ variables). We now define the hard distribution $\Gamma$ over $\mathcal{X}$ as follows by describing the procedure of picking a formula in $\mathcal{X}$ according to $\Gamma$. 
\begin{enumerate}
\item Pick $\ell \in \{\lfloor n^{1/4}\rfloor,\lfloor n^{1/4}\rfloor +1,\dots,\lceil n^{3/4}\rceil\}$ uniformly at random. 
\item Draw  a hash function $h_{\ell} \leftarrow H(n,\ell,k)$ uniformly at random.  
\item  Let $\formula_{\ell}$ be the formula whose set of satisfying assignments is $h_{\ell}^{-1}(F^\ell)$. (Recall, $h_\ell: \tf^n \to \tf^\ell$.)
\item The formula $\formula_{\ell}$ 
is the picked formula.
 \end{enumerate}



\subsubsection{Properties of the hard instance}\label{sec:hardproperties}

Let  $f_{\ell} :=  \E[|\satisfying{\formula_{\ell}}|] = \E[|h_{\ell}^{-1}(F^{\ell})|]$ for  $\ell \in \{\lfloor n^{1/4}\rfloor,\lfloor n^{1/4}\rfloor+1,\dots,\lceil n^{3/4}\rceil\}$. Observe, it follows from the construction of $\formula_{\ell}$ and the properties of hash functions that $f_{\ell} = \frac{2^n}{2^{\ell}}$.

\begin{lemma}
\label{lem:supportofi}
    With  probability at least $1 - n 2^{-n/20}$, 
    we have \begin{equation}\label{eq:supportofi}
       \mbox{ for all } \ell,\ ||\satisfying{\formula_{\ell}]}| - f_{\ell}| \le 2^{-n/10} f_{\ell}.
    \end{equation}
\end{lemma}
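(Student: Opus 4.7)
The plan is a straightforward second-moment (Chebyshev) argument, applied per value of $\ell$ and stitched together by a union bound. Fix $\ell$ in $\{\lfloor n^{1/4}\rfloor,\dots,\lceil n^{3/4}\rceil\}$ and write $|\satisfying{\formula_\ell}| = |h_\ell^{-1}(F^\ell)| = \sum_{s\in\tf^n}X_s$ with $X_s=\mathbf{1}[h_\ell(s)=F^\ell]$. Since $h_\ell\in H(n,\ell,k)$ with $k\geq 2$, the $X_s$ are pairwise independent $\mathrm{Bernoulli}(2^{-\ell})$, so $\E[|\satisfying{\formula_\ell}|]=f_\ell=2^{n-\ell}$ and $\mathrm{Var}(|\satisfying{\formula_\ell}|)=2^n\cdot 2^{-\ell}(1-2^{-\ell})\leq f_\ell$.

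Next I apply Chebyshev with deviation $2^{-n/10}f_\ell$:
\[
\Pr\bigl[\,\bigl||\satisfying{\formula_\ell}|-f_\ell\bigr|\geq 2^{-n/10}f_\ell\,\bigr] \;\leq\; \frac{f_\ell}{(2^{-n/10}f_\ell)^2}\;=\;\frac{2^{n/5}}{f_\ell}\;=\;2^{\,\ell-4n/5}.
\]
Since $\ell\leq\lceil n^{3/4}\rceil$, the exponent $\ell-4n/5$ is at most $-n/20$ for all sufficiently large $n$ (the algebra reduces to $3n/4\geq \lceil n^{3/4}\rceil$, which holds past a small constant). Thus each individual $\ell$ violates the target bound with probability at most $2^{-n/20}$, and since the index range contains at most $n$ values of $\ell$, a union bound yields the claimed total failure probability $n\cdot 2^{-n/20}$.

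The only real subtlety worth flagging is that the Chernoff-style $k$-wise concentration bound stated as Lemma~\ref{lem:k-wise-conc-ine} cannot be invoked directly: its precondition $k\geq \gamma^2\mu$ fails by an exponential margin for $\gamma=2^{-n/10}$ and $\mu=f_\ell$, since $\gamma^2\mu=2^{4n/5-\ell}$ is huge while $k=(\log n)^9$ is only polylogarithmic. Trying to salvage the lemma by shrinking $\gamma$ to $\sqrt{k/\mu}$ yields only failure probability $\exp(-k/3)=\exp(-(\log n)^9/3)$, which is strictly weaker than the required $2^{-n/20}$. The point is that the allowed relative deviation $2^{-n/10}$ is enormous compared with the actual fluctuation scale $\sqrt{\mathrm{Var}/f_\ell^2}\leq 1/\sqrt{f_\ell}\leq 2^{-(n-n^{3/4})/2}$, so the crude second moment — and hence only pairwise independence, not the full $k$-wise structure of $h_\ell$ — is what is needed for this particular lemma.
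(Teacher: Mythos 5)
Your proof is correct and takes essentially the same approach as the paper: variance $\le f_\ell$ from pairwise independence, Chebyshev with deviation $2^{-n/10}f_\ell$ giving failure probability $2^{\ell-4n/5}\le 2^{-n/20}$, and a union bound over the $\le n$ values of $\ell$. As an aside, the paper's proof contains a small typo (it writes the Chebyshev threshold as $2^{-n/5}f_\ell$ although the subsequent computation and the lemma statement both match $2^{-n/10}f_\ell$), and your remark that the $k$-wise Chernoff bound of Lemma~\ref{lem:k-wise-conc-ine} is inapplicable here is a correct and useful sanity check.
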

\begin{proof}
  It is straightforward to see that the variance of $|\satisfying{\formula_{\ell}}|$ is $Var[|\satisfying{\formula_{\ell}}|] \le f_{\ell}$. So by Chebyshev's inequality, 
  \[
  \Pr\left[||\satisfying{\formula_{\ell}}| - f_{\ell}| \ge 2^{-n/5} f_{\ell}\right] \le \frac{2^{n/5}}{f_{\ell}} \le \frac{2^{n/5}\cdot 2^{\ell}}{2^n}\le 2^{-n/20}.
  \]
  The lemma now follows from a union bound over all $\ell$.
\end{proof}




\begin{definition} Once $\ell\in \{\lfloor n^{1/4}\rfloor, \lfloor n^{1/4}\rfloor +1, \dots, \lceil n^{3/4}\rceil\}$ has been picked in Step 1 of the construction of the hard instance (Section~\ref{sec:hardinstance}), let for any $S \subseteq \tf^n$
$$\num{S}{\ell} = \mathbb{E}\left[|\satisfying{\formula_{\ell}} \cap S|\right],$$ 
where the expectation is over the choice of the hash function is Step 2 of the construction of the hard instance.
 \end{definition}

Note that for any $S\subseteq \tf^n$ the value of $\num{S}{\ell}$ is $|S|/2^{\ell}$.
 

\begin{lemma}
\label{lem:conc-ine}
With   probability at least $1 - \frac{2^{O(q^2)}}{n^{(\log n)^4}}$, 
the following holds: 

For any node $v$ in the decision tree $R$ and any atom $A \in \At(U_v)$,
\begin{enumerate}
    \item If $\num{U_v}{\ell}  < \frac{1}{n^{(\log n)^4}}$ then $|U_v \cap \satisfying{\formula_{\ell}}| = 0$. Similarly, if $\num{A}{\ell}  < \frac{1}{n^{(\log n)^4}}$ for any atom $A \in \At(U_v)$ then $|A\cap \satisfying{\formula_{\ell}}| = 0$
    \item  If  $\num{U_v}{\ell} \ge (\log n)^5$ then $\frac{1}{2} \num{U_v}{\ell} \leq |U_v \cap \satisfying{\formula_{\ell}}| \leq \frac{3}{2} \num{U_v}{\ell}$. Similarly, if  $\num{A}{\ell} \ge (\log n)^5$ then $\frac{1}{2}\num{A}{\ell} \leq  |A \cap \satisfying{\formula_{\ell}}| \leq \frac{3}{2} \num{A}{\ell}$
    \item If $\num{U_v}{\ell} \le (\log n)^5$ then $|U_v \cap \satisfying{\formula_{\ell}}| \le 2(\log n)^{5}$. Similarly, if $\num{A}{ \ell} \le (\log n)^5$ then $|A \cap \satisfying{\formula_{\ell}}| \le 2(\log n)^{5}$.
\end{enumerate}
\end{lemma}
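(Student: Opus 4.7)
The plan is to view $|S \cap \satisfying{\formula_\ell}|$, for $S$ equal to $U_v$ or any atom $A \in \At(U_v)$, as a sum of $k$-wise independent Bernoullis and apply Lemma~\ref{lem:k-wise-conc-ine} in each of the three regimes, closing with a union bound over all nodes $v$ of the decision tree $R$ and all sets attached to them. Concretely, fixing such a pair $(v, S)$, I would write $X := |S \cap \satisfying{\formula_\ell}| = \sum_{s \in S} \mathbf{1}[h_\ell(s) = F^\ell]$; since $h_\ell$ is drawn from the $k$-wise independent family $H(n,\ell,k)$, the summands are $k$-wise independent Bernoulli$(2^{-\ell})$ random variables, so $\mu := \E[X] = |S|/2^\ell = \num{S}{\ell}$.

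For part 1, when $\mu < 1/n^{(\log n)^4}$, Markov's inequality alone yields $\Pr[X \ge 1] \le \mu < 1/n^{(\log n)^4}$. For part 2, when $\mu \ge (\log n)^5$, I would apply Lemma~\ref{lem:k-wise-conc-ine}(1) with $\gamma = 1/2$ whenever $\mu \le 4k$, getting $\Pr[|X-\mu| \ge \mu/2] \le \exp(-\mu/12)$; when $\mu > 4k$, I would instead take $\gamma := \sqrt{k/\mu} \le 1/2$ so that $\gamma^2 \mu = k$ meets the hypothesis, and note that $\gamma\mu = \sqrt{k\mu} \le \mu/2$ in this range, whence the event $|X - \mu| \ge \mu/2$ is dominated and bounded by $\exp(-k/3)$. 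For part 3, when $\mu \le (\log n)^5$, I would take $\gamma := (\log n)^5/\mu \ge 1$ in Lemma~\ref{lem:k-wise-conc-ine}(2); the hypothesis $k \ge \gamma\mu = (\log n)^5$ is satisfied, so $\Pr[X \ge \mu + (\log n)^5] \le \exp(-(\log n)^5/3)$, which in particular implies $\Pr[X \ge 2(\log n)^5] \le \exp(-(\log n)^5/3)$.

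To finish, the decision tree $R$ has $2^{O(q^2)}$ nodes and each node $v$ carries at most $2^{q} + 1 = 2^{O(q)}$ relevant sets ($U_v$ together with its atoms in $\At(U_v)$), giving $2^{O(q^2)}$ events in total. Each event fails with probability at most $\exp(-\Omega((\log n)^5))$, and with the choice $k = (\log n)^9$ (absorbing the $\ln 2$ factor into the exponent of $k$ if necessary), this is at most $1/n^{(\log n)^4}$; a single union bound then yields the claimed failure probability $2^{O(q^2)}/n^{(\log n)^4}$.

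The main technical subtlety I anticipate is in part 2 when $\mu$ is much larger than $k$: the natural choice $\gamma = 1/2$ in Lemma~\ref{lem:k-wise-conc-ine}(1) demands $k \ge \mu/4$, which can fail. The fix is to use $\gamma := \sqrt{k/\mu}$, whose associated deviation $\sqrt{k\mu}$ is at most $\mu/2$ precisely when $\mu \ge 4k$; this threshold is exactly where the two sub-cases of part 2 dovetail, and the resulting tail bound $\exp(-k/3)$ is even stronger than the bound from the other sub-case.
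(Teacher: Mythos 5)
Your proof is correct and follows the same overall strategy as the paper's: interpret $|S\cap\satisfying{\formula_\ell}|$ as a sum of $k$-wise independent Bernoulli$(2^{-\ell})$ variables, apply Markov for the tiny regime (part 1), Lemma~\ref{lem:k-wise-conc-ine}(1) for the large regime (part 2), Lemma~\ref{lem:k-wise-conc-ine}(2) for the middle regime (part 3), and finish with a union bound over the $2^{O(q^2)}$ nodes, the $2^{O(q)}$ atoms per node, and (which you should add) the $O(n^{3/4})$ values of $\ell$.

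The one place you genuinely depart from the paper is part 2, and your departure is a real fix. The paper takes $\gamma = 1/2$ throughout the regime $\num{U_v}{\ell}\ge(\log n)^5$ and asserts in passing that $k=(\log n)^9 > \gamma^2\num{U_v}{\ell}$; but this requires $\num{U_v}{\ell} < 4(\log n)^9$, which is not implied by the hypothesis of part 2 (indeed the lemma is later invoked for nodes where $\num{U_v}{\ell}$ can be as large as $n^{(\log n)^4}$ or more). Your sub-case $\mu > 4k$ with $\gamma=\sqrt{k/\mu}$ correctly restores the hypothesis $k\ge\gamma^2\mu$, and your observation that $\gamma\mu=\sqrt{k\mu}\le\mu/2$ exactly at the threshold $\mu\ge 4k$ cleanly dovetails the two sub-cases and gives the even stronger tail $\exp(-k/3)$. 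Both your and the paper's bounds have the same minor constant-factor slack in translating $\exp(-(\log n)^5/12)$ into $n^{-(\log n)^4}$; your remark about rescaling $k$ is the right way to dispose of it.

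Two small omissions worth tightening: (i) you should union-bound over $\ell$ as well (the paper does), which only costs an extra polynomial factor absorbed into $2^{O(q^2)}$; (ii) in part 3 it is worth stating explicitly, as the paper does, that the event $\{X\ge 2(\log n)^5\}$ is contained in $\{X\ge(1+\gamma)\mu\}$ because $\mu+(\log n)^5\le 2(\log n)^5$ under the hypothesis $\mu\le(\log n)^5$ — you wrote the right chain but compressed this step.
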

\begin{proof}
  From Markov's inequality, we have 
  \begin{align*}
      \Pr[|U_v \cap \satisfying{\formula_{\ell}}| \ge 1] \le \Pr\left[|U_v \cap \satisfying{\formula_{\ell}}| \ge \left(\frac{1}{\num{U_v}{\ell}}-1\right)\num{U_v}{\ell}\right] \le 2 \num{U_v}{\ell}
  \end{align*}
   Taking a union bound over all nodes $v$ with  $\num{U_v}{\ell}  < \frac{1}{n^{(\log n)^4}}$ and all possible values of $\ell$ (which can take $O(n^{3/4})$ values), we get the first part.
   
   From the first part of the Lemma~\ref{lem:k-wise-conc-ine}, by setting $\gamma =1/2$, we have  
    \begin{align*}
      \Pr[|U_v \cap \satisfying{\formula_{\ell}}| \ge  \num{U_v}{\ell}]   \le \exp\left(-\frac{\num{U_v}{\ell}}{12}\right)
  \end{align*}
  for all nodes $v$ in $R$ such that $\num{U_v}{\ell} \ge (\log n)^5$ (note that we have $k = (\log n)^9 > \gamma^2 \num{U_v}{\ell}$). 
  Taking a union bound over all such nodes $v$  and all possible values of $\ell$, we get the second bound.
  
  Let $\gamma_v = \frac{(\log n)^5}{\num{U_v}{\ell}}$. Since $k = (\log n)^9 > \gamma_v \num{U_v}{\ell}$, from the second part of 
 Lemma~\ref{lem:k-wise-conc-ine}, we have 
   \begin{align*}
      \Pr[|U_v \cap \satisfying{\formula_{\ell}}| \ge \gamma_v \num{U_v}{\ell}]   \le \exp\left(-\gamma_v \frac{\num{U_v}{\ell}}{3}\right) \le O\left(\frac{1}{n^{(\log n)^4}}\right).
  \end{align*}
  for all nodes $v$ such that $\num{U_v}{\ell} \le (\log n)^5$.
   Taking a union bound over all such nodes $v$ and all possible values of $\ell$, we get the third part.
\end{proof}

\subsection{Proof of Theorem~\ref{thm:main2}}\label{sec:mainproof}



\begin{proof}[Proof of Theorem~\ref{thm:main2}]
By Lemma~\ref{obs:core} and Yao's minimax theorem, we can assume that our $\counter$ $\algo$ is a (deterministic) semi-oblivious counter whose input is a randomly chosen formula $\formula \in \mathcal{\formula}_n$, as per distribution $\Gamma$ and $\algo$ returns $\est$ which is an $(\epsilon, 2/3)$-approximation of $|\satisfying{\formula}|$. 
We will prove that $\algo$ must make $q =\Tilde{\Omega}(\log n)$ many $SAT$-oracle calls. 

Recall the distribution $\Gamma$ (Section~\ref{sec:hardinstance}) over the set of all formulas. We can assume that the input to $\algo$ is $\formula_{\ell}$, where  $\ell$ is uniformly drawn from the set $\{\lfloor n^{1/4}\rfloor,\lfloor n^{1/4}\rfloor+1,\dots,\lceil n^{3/4}\rceil\}$. 

Consider the path taken by the semi-oblivious counter $\algo$ in the decision tree. 
Let the $i$th query made by $\algo$ (that is at vertex $v_i$) be $A_i = U_i \cup O_i$ (as in Definition~\ref{def:core}).
Let $Z_i$ be the configuration (denoted as $c_i$ in Definition~\ref{def:core}) of the sample from $A_i$. Note that the domain of $Z_i$ is $\Omega_i := O_i \cup \At(U_i) \cup \bot$.

Let $\good$ be the event that the condition in Equation~\ref{eq:supportofi} (in Lemma~\ref{lem:supportofi}) and the condition in Lemma~\ref{lem:conc-ine} holds. 
Note that by Lemma~\ref{lem:supportofi} and Lemma~\ref{lem:conc-ine}  if $q \leq \log n$ then 
\begin{equation}\label{eq:good}
    \Pr[\good] = 1-o(1).
\end{equation}

Let $X$ be the random variable that takes values in $\{\lfloor n^{1/4}\rfloor, \lfloor n^{1/4}\rfloor +1 , \dots, \lceil n^{3/4}\rceil\}$ uniformly at random (in Step 1 of the construction of hard instance). 
 Note that by the triangle inequality
 \begin{equation}\label{eq:triangle}
 \left|\est - |\satisfying{\formula_{\ell}}|\right| \geq \left|\est - \frac{2^n}{2^{\ell}}\right| - \left|\frac{2^n}{2^{\ell}} - |\satisfying{\formula_{\ell}}|\right|.\end{equation}
 By Lemma~\ref{lem:supportofi} we know that with probability at least $(1-1/6)$,  we have $|\frac{2^n}{2^{\ell}} - |\satisfying{\formula_{\ell}}|| \leq \frac{1}{2^{n/10}}\cdot \frac{2^n}{2^{\ell}}.$
On the other hand, since $\algo$ outputs an $(\epsilon, \delta)$-approximation of $|\satisfying{\formula}|$ (with $\epsilon < 1/2$ and $\delta < 1/6$),  Equation~\ref{eq:triangle} implies that with probability at least $(1-\frac{1}{6} - \delta) \geq \frac{2}{3}$ we have 
\begin{equation}\label{eq:ell}
    \left|\est - \frac{2^n}{2^{\ell}}\right| \leq \left(\epsilon + \frac{1}{2^{n/10}}\right)\frac{2^n}{2^{\ell}} \leq \frac{1}{2}\cdot \frac{2^n}{2^{\ell}},
\end{equation}
where the last inequality follows from the fact that $\epsilon \leq 1/3$. Since $|\frac{2^n}{2^{\ell}} - \frac{2^n}{2^{\ell'}}| > \frac{1}{2}\cdot \frac{2^n}{2^{\ell}}$ for any integer $\ell' \neq \ell$, so  Equation~\ref{eq:ell} is satisfied only when $\hat{X}$ is same as the picked $\ell$ (that is $\hat{X} = X$) where,
$$\hat{X} =  \argmin_{\ell \in \{\lfloor n^{1/4}\rfloor,\lfloor n^{1/4}\rfloor+1,\dots, \lceil n^{3/4} \rceil\} }  \left|\frac{2^n}{2^{\ell}} - \est\right| .$$


Hence, assuming $\good$
\begin{equation}\label{eq:Xhat}
    \frac{1}{3} \geq \Pr[\hat{X} \neq X].
\end{equation}

By Fano's Inequality (Theorem~\ref{thm:fano}) 
\begin{equation}\label{eq:fano2}
    \Pr[\hat{X} \neq X] \geq  1 - \frac{I(X;\hat{X})}{O(\log n)}
\end{equation}

Since the final outcome of the algorithm is determined by the outcome at each step, i.e., $\bm{Z}=(Z_1, \dots, Z_q)$, so by the data processing inequality (Equation~\ref{eq:dataprocessing}), we have 
\begin{equation}\label{eq:data2}
    I(X;\hat{X}) \le I(X;Z_1, \dots, z_q).
\end{equation}

Let $Y_i$ be the random variable that defined as 
\begin{equation}
    \label{eq:defY}
    Y_i =
    \left\{
	\begin{array}{ll}
		1  & \mbox{if } \frac{1}{n^{(\log n)^4}}\leq \num{U_{i}}{\ell}   \leq n^{(\log n)^4} \\ 
		0 & \mbox{otherwise } 
	\end{array}
\right.
\end{equation}

Again by the data-processing inequality (Equation~\ref{eq:dataprocessing}), we have \begin{equation}\label{eq:data3}
    I(X;Z_1,\dots,Z_q) \le I(X;Y_1,Z_1,\dots,Y_q,Z_q).
\end{equation}

By the chain rule of mutual information, we have
\begin{equation}\label{eq:mutual}
I(X;Y_1,Z_1,\dots,Y_q,Z_q)
= \sum_{i \in [q]}I(X;Y_i,Z_i|Y_1,Z_1,\dots,Y_{i-1},Z_{i-1}) 
\end{equation}

Finally, we will show, in the following lemma, that conditioned on the fact $\good$ happens, we can upper bound $I(X;Y_1,Z_1,\dots,Y_q,Z_q)$ by $O(\log\log n)$.

\begin{lemma}
    \label{lem:bound-mutual-inf-overall}
    $I(X;(Y_1,Z_1,\dots,Y_q,Z_q)) \le q ( O(\log \log n) + O(\log q) + \frac{2^{2q} poly(\log n)}{n^{(\log n)^3}})$.
    
\end{lemma}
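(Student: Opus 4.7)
The strategy is to bound each per-step term $I(X;(Y_i,Z_i)\mid H_{i-1})$ (where $H_{i-1}=(Y_1,Z_1,\dots,Y_{i-1},Z_{i-1})$) by $O(\log\log n)+O(\log q)+\tfrac{2^{2q}\,\mathrm{poly}(\log n)}{n^{(\log n)^3}}$ and then sum over $i\in[q]$, using the chain-rule decomposition already recorded in Equation~\ref{eq:mutual}. The first step is to split each per-step term as $I(X;(Y_i,Z_i)\mid H_{i-1}) = I(X;Y_i\mid H_{i-1}) + I(X;Z_i\mid Y_i,H_{i-1})$. Since $Y_i$ is binary, $I(X;Y_i\mid H_{i-1})\le H(Y_i)\le 1$, which is absorbed into $O(\log q)$. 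For the second summand I will average over $(H_{i-1},Y_i)$ and treat $Y_i=1$ and $Y_i=0$ separately.

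In the $Y_i=1$ regime, the history $H_{i-1}=h$ determines the (deterministic) set $U_i$ and hence $|U_i|$. The event $Y_i=1$ reads as $|U_i|/2^{\ell}\in[n^{-(\log n)^4},n^{(\log n)^4}]$, which on taking $\log_2$ confines $\ell$ to an interval of length at most $2(\log n)^4\log_2 n = O((\log n)^5)$. Therefore $H(X\mid Y_i=1,H_{i-1}=h)\le \log_2 O((\log n)^5) = O(\log\log n)$, giving $I(X;Z_i\mid Y_i=1,H_{i-1}=h)\le O(\log\log n)$, as required.

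In the $Y_i=0$ regime, I will apply Lemma~\ref{lem:kl} to bound $I(X;Z_i\mid Y_i=0,H_{i-1}=h)$ by $\max_{\ell} KL\!\left(P_{Z_i\mid X=\ell,Y_i=0,H_{i-1}=h}\,\big\|\,Q_h\right)$ for an $\ell$-independent auxiliary distribution $Q_h$ on $\Omega_i=O_i\cup\At(U_i)\cup\{\bot\}$ of my choosing. Split into two sub-regimes: (i) $\num{U_i}{\ell}<n^{-(\log n)^4}$, where Lemma~\ref{lem:conc-ine}(1) combined with a union bound over atoms gives $|U_i\cap\satisfying{\formula_\ell}|=0$ except on an event of probability at most $\tfrac{2^q\cdot O(1)}{n^{(\log n)^4}}$, so conditionally the oracle returns a uniform element of $O_i$ (or $\bot$ if no non-$\bot$ sample is available)—a law independent of $\ell$; (ii) $\num{U_i}{\ell}>n^{(\log n)^4}$, where Lemma~\ref{lem:conc-ine}(2)--(3) shows that for every atom $A$ with $\num{A}{\ell}\ge(\log n)^5$ one has $|A\cap\satisfying{\formula_\ell}|=(1\pm\tfrac12)\num{A}{\ell}$, while all other atoms contribute at most $2^q\cdot 2(\log n)^5\ll\num{U_i}{\ell}$, and the $|O_i|\le q$ carried-over samples are negligible. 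I would take $Q_h$ to be the $\ell$-free ``ideal'' distribution $A\mapsto|A|/|U_i|$ on atoms, smoothed by adding a small uniform component on $O_i\cup\{\bot\}$ to keep the ratios $P_{Z_i\mid X=\ell}/Q_h$ bounded even on rare events where concentration fails. Bookkeeping shows that the bad-event mass plus the concentration slack jointly produce a KL of at most $\tfrac{2^{2q}\,\mathrm{poly}(\log n)}{n^{(\log n)^3}}$; the $2^{2q}$ factor arises from $|\At(U_i)|\le 2^q$ atoms together with an additional $2^q$ factor from union-bounding over subsets of atoms that might simultaneously miss their concentration guarantee.

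The main technical obstacle lies in the large-$\num{U_i}{\ell}$ sub-case of $Y_i=0$: the natural ideal $|A|/|U_i|$ is $\ell$-free only on atoms, but $P_{Z_i\mid X=\ell}$ also places positive mass on $O_i$ and $\bot$ whose ratio to $Q_h$ would blow up if $Q_h$ assigned them zero. The smoothing parameter in $Q_h$ must therefore be chosen small enough not to inflate the KL on typical events yet large enough to tame those ratios on atypical ones, and the accounting must track the $2^q$ atoms without letting the error exceed the $\tfrac{2^{2q}\,\mathrm{poly}(\log n)}{n^{(\log n)^3}}$ bound. Once the per-step inequality is in hand, summing $i=1,\dots,q$ immediately yields the statement of the lemma.
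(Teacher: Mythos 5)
Your proposal matches the paper's proof in all essential respects: chain-rule decomposition into per-step terms, the trivial $+1$ bound for the binary $Y_i$, the entropy bound $O(\log\log n)$ when $Y_i=1$, and the auxiliary-distribution/KL argument via Lemma~\ref{lem:kl} and Lemma~\ref{lem:conc-ine} when $Y_i=0$; even your choice of auxiliary distribution (proportional $|A|/|U_i|$ on atoms mixed with a constant-weight uniform component on $O_i\cup\{\bot\}$) is exactly the paper's $Q_{Z_i}$, since the ``smoothing'' needed is a constant fraction, not a small one. The only cosmetic discrepancy is bookkeeping: in the $Y_i=0$ regime the $O(\log q)$ should be attributed to the density ratios $P/Q \le O(q)$ on $O_i\cup\{\bot\}$ and on well-concentrated atoms (not merely to absorbing the $+1$ from $I(X;Y_i\mid H_{i-1})$), while only the atoms with $1/n^{(\log n)^4}<\num{z_i}{\ell}\le(\log n)^5$ contribute the $\frac{2^{2q}\mathrm{poly}(\log n)}{n^{(\log n)^3}}$ remainder.
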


We defer the proof of Lemma~\ref{lem:bound-mutual-inf-overall} and complete the proof of Theorem~\ref{thm:main2} assuming Lemma~\ref{lem:bound-mutual-inf-overall}.

From the Equations~\ref{eq:Xhat}, \ref{eq:fano2}, \ref{eq:data2}, \ref{eq:data3} and Lemma~\ref{lem:bound-mutual-inf-overall}, we have that assuming $\good$ happens
\begin{align*}
    \frac{1}{3} \geq & \Pr[\hat{X} \neq X] && \mbox{[From Equation~\ref{eq:Xhat}] } \\
    \geq &  1 - \frac{I(X;\hat{X})}{O(\log n)} && \mbox{[From Equation~\ref{eq:fano2}]}\\
    \geq &  1 - \frac{I(X;Z_1, \dots, z_q)}{O(\log n)} && \mbox{[From Equation~\ref{eq:data2}]}\\
    \geq & 1 - \frac{I(X;Y_1,Z_1,\dots,Y_q,Z_q)}{O(\log n)} && \mbox{[From Equation~\ref{eq:data3}]}\\    
    \geq & 1 - \frac{I(X;Y_1,Z_1,\dots,Y_q,Z_q)}{O(\log n)} && \mbox{[From Equation~\ref{eq:mutual}]}\\
    \ge & 1 - \frac{q \log \log n}{\log n} && \mbox{[From Lemma~\ref{lem:bound-mutual-inf-overall}]}
\end{align*}

 Thus, from Equation~\ref{eq:good}, if $q \leq \log n$ 
 $$1 - \frac{q \log \log n}{\log n} \leq \frac{1}{3} + \Pr[\good] \leq \frac{1}{3} + O(1)$$
which implies $$q = \Omega\left(\frac{\log n}{\log \log n}\right).$$

\end{proof}

\subsubsection{Proof of Lemma~\ref{lem:bound-mutual-inf-overall}}

\begin{lemma}
\label{lem:bound-mutual-inf} The following holds:
 \begin{enumerate}
     \item  Conditioned on event that $Y_j = 1$ for some $j \le i$, $$I(X;Z_i|Y_1,Z_1,\dots,Y_{i-1},Z_{i-1},Y_i) \le   O(\log \log n),$$
     \item $I(X,Y_i|Y_1,Z_1,\dots,Y_{i-1},Z_{i-1}) \le 1$,
     \item Conditioned on the event that $Y_1= 0, \dots, Y_{i-1} =0$, $$I(X,Z_i|Y_1,Z_1,\dots,Y_{i-1},Z_{i-1},Y_i) \le  O(\log q) + \frac{2^{2q} poly(\log n)}{n^{(\log n)^3}}.$$
 \end{enumerate}
\end{lemma}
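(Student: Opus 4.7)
The three parts require quite different techniques, so I would handle them in increasing order of difficulty.

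Part 2 is immediate: since $Y_i$ is binary-valued, $I(X;Y_i \mid \cdots) \le H(Y_i \mid \cdots) \le 1$.

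For Part 1, the plan is to bound the mutual information by the conditional entropy of $X$ itself: $I(X;Z_i \mid \cdots, Y_j=1) \le H(X \mid \cdots, Y_j=1)$. The key observation is that along any path of the (deterministic) decision tree, the set $U_j$—and in particular its cardinality $|U_j|$—is determined by the history $Z_1,\ldots,Z_{j-1}$, hence by the conditioning. The event $Y_j=1$ then says $\num{U_j}{\ell} = |U_j|/2^{\ell}$ lies in $[n^{-(\log n)^4}, n^{(\log n)^4}]$, which confines $\ell$ to an interval of width $O((\log n)^5)$ around $\log_2 |U_j|$. Therefore $H(X \mid \cdots, Y_j = 1) \le \log(O((\log n)^5)) = O(\log\log n)$, giving the desired bound.

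Part 3 is the main technical piece. The plan is to invoke Lemma~\ref{lem:kl} with an \emph{$\ell$-independent} auxiliary distribution $Q_{Z_i}$ over $\Omega_i = O_i \cup \At(U_i) \cup \{\bot\}$, so that $I(X;Z_i \mid \cdots) \le \max_{\ell} KL(P_{Z_i \mid X=\ell, \cdots} \,\|\, Q_{Z_i})$. Under the event $Y_j = 0$ for all $j \le i$, Lemma~\ref{lem:conc-ine} gives a sharp dichotomy for every atom $A \in \At(U_i)$: either $|A \cap \satisfying{\formula_\ell}| = 0$ (when $\num{A}{\ell} < n^{-(\log n)^4}$) or $|A \cap \satisfying{\formula_\ell}|$ is tightly concentrated around $\num{A}{\ell}$ (when $\num{A}{\ell} > n^{(\log n)^4}$). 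I would choose $Q_{Z_i}$ to assign mass proportional to $|A|$ for each atom $A \in \At(U_i)$ and appropriate masses to the known elements of $O_i$; crucially, both $|A|$ and $|O_i|$ depend only on the decision-tree path, not on $\ell$. The KL divergence then splits into (i) a main contribution from the ``large'' atoms and $O_i$, controlled by the logarithm of the effective support size (at most $|\Omega_i| \le q + 2^q + 1$, but collapsing to an $O(\log q)$ entropy bound after accounting for the constraint $Y_i = 0$), and (ii) a tail contribution from atoms in the ``tiny'' regime, whose total probability mass under $P$ is bounded by the Lemma~\ref{lem:conc-ine} failure probability; summing these error terms over the $2^{O(q^2)}$ nodes of the decision tree produces precisely the additive slack $2^{2q}\mathrm{poly}(\log n)/n^{(\log n)^3}$.

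The main obstacle I expect is the construction of $Q_{Z_i}$ in Part 3. Because $\ell$ is \emph{not} pinned down when all $Y_j = 0$, different candidate values of $\ell$ lead to different identities of the ``large'' atoms (those with $\num{A}{\ell}$ above the threshold), and $Q_{Z_i}$ must simultaneously be a good proxy for $P_{Z_i \mid X=\ell,\cdots}$ across all admissible $\ell$. I anticipate that the right approach is to define $Q_{Z_i}$ purely from atom sizes $|A|$ and the cardinalities in $O_i$ (both $\ell$-free quantities) and then absorb the mismatch on the rare ``tiny-atom'' event into the additive error term using Lemma~\ref{lem:conc-ine}, which is exactly where the peculiar $n^{-(\log n)^3}$ factor will enter.
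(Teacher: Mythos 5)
Parts 1 and 2 of your plan match the paper's proof exactly: bound by conditional entropy, note that $|U_j|$ is fixed by the history so $Y_j=1$ pins $2^\ell$ to within a multiplicative $n^{2(\log n)^4}$ window giving $O(\log\log n)$ bits of entropy, and the trivial binary bound for $Y_i$. Part 3 also starts on the right track: you invoke Lemma~\ref{lem:kl}, and your auxiliary distribution (mass proportional to $|A|$ on atoms, uniform on $O_i \cup \{\bot\}$) is the same one the paper constructs.

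However, your explanation of the tail term in Part 3 contains a genuine gap. You claim that under $Y_j=0$ for all $j\le i$, every atom $A\in\At(U_i)$ falls into a sharp dichotomy (empty when $\num{A}{\ell}<n^{-(\log n)^4}$, concentrated when $\num{A}{\ell}>n^{(\log n)^4}$), and that the contribution of the remaining ``tiny'' atoms is bounded by the \emph{failure probability} of Lemma~\ref{lem:conc-ine}. Both claims are off. First, $Y_i=0$ constrains only $\num{U_i}{\ell}$, not the individual atoms; atoms with $\num{A}{\ell}$ anywhere in the middle range $[n^{-(\log n)^4},(\log n)^5]$ are perfectly possible, and the concentration threshold in Lemma~\ref{lem:conc-ine} is $(\log n)^5$, not $n^{(\log n)^4}$. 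Second, the entire argument is carried out conditioned on the event $\good$, under which Lemma~\ref{lem:conc-ine} simply \emph{holds} --- there is no failure probability left to absorb. The smallness of the middle-range contribution comes from a different source: once $Y_i=0$ and the sample actually lands in $U_i$, one must have $\num{U_i}{\ell}>n^{(\log n)^4}$, so by part~2 of Lemma~\ref{lem:conc-ine} applied to $U_i$, $|U_i\cap\satisfying{\formula_\ell}|\ge\tfrac12\num{U_i}{\ell}$, while part~3 applied to any middle-range atom gives $|A\cap\satisfying{\formula_\ell}|\le 2(\log n)^5$; hence $P_{Z_i|X}(A\mid\ell)\le 4(\log n)^5/\num{U_i}{\ell}$, which is quasi-polynomially small. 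Summing this over the at-most-$2^q$ atoms \emph{at the current node} (not over the $2^{O(q^2)}$ nodes of the tree, as you suggest --- the union bound over nodes was already spent establishing $\good$) yields the $2^{2q}\mathrm{poly}(\log n)/n^{(\log n)^3}$ slack. Without this chain of reasoning --- in particular, the use of $Y_i=0$ to force $\num{U_i}{\ell}$ large, combined with part~3 of Lemma~\ref{lem:conc-ine} to cap the numerator --- the tail bound does not close.
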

\begin{proof} We will prove Part 1, 2, and 3 one by one. 

\ 

 \noindent \textbf{Proof of Part 1: } We will prove that conditioned on event that $Y_j = 1$ for some $j \le i$, $$I(X;Z_i|Y_1,Z_1,\dots,Y_{i-1},Z_{i-1},Y_i) \le   O(\log \log n).$$
From (\ref{eq:mutual-inf-ub-entropy}),   we have $$I(X,Z_i|Y_1,Z_1,\dots,Y_{i-1},Z_{i-1},Y_i) \le H(X|Y_1,Z_1,\dots,Y_{i-1},Z_{i-1},Y_i).$$ Note that if $Y_j = 1$ 
then by definition of $Y_j$ we have $\frac{1}{n^{(\log n)^4}} \leq \frac{|U_j|}{2^{\ell}} \leq  n^{(\log n)^4}$, that is, $$\frac{|U_j|}{n^{(\log n)^4}} \leq 2^{\ell} \leq  |U_j| n^{(\log n)^4}.$$ Note that by  definition of the semi-oblivious counter the sets $|U_1|,\dots,|U_i|$ are deterministically determined by $Z_1,\dots,Z_i$ . Thus, there are $O(\log (n^{(\log n)^4})) = O((\log n)^5)$ possible values of $\ell$ and hence $$H(X|Y_1,Z_1,\dots,Y_{i-1},Z_{i-1}) \le O(\log \log n).$$ This proves the first part. 

\ 

\noindent \textbf{Proof of Part 2: } 
Since $Y_i$ can take only binary values,  we have $$I(X,Y_i|Y_1,Z_1,\dots,Y_{i-1},Z_{i-1}) \le 1.$$ 
This proves Part 2.

\ 

\noindent \textbf{Proof of Part 3:}
We will now prove the upper bound on $I(X;(Y_i,Z_i)|Y_1,Z_1,\dots,Y_{i-1},Z_{i-1})$ for each $i \in [q]$, conditioned on $Y_j = 0$ for all $j \in [i]$.

Note that $Z_1,\dots,Z_{i-1}$ fixes the size of $O_i$ and each atoms in $\At(U_i)$. Note that the domain of $Z_i$, i.e., $\Omega_i$ is $\bot \cup O_i \cup \At(U_i)$. Let $r = |O_i|+2 \le q+2$. 

We  define an auxiliary distribution $Q_{(Y_i, Z_i)}$ as follows: 
 $$Q_{(Y_i,Z_i)}(y_i,z_i) := Q_{Y_i}(y_i) Q_{Z_i|Y_i}(z_i|y_i)$$ where,  
 $Q_{Y_i}(0) = Q_{Y_i}(1) = 1/2$ and

\[
Q_{Z_i|Y_i}(z_i|y_i) =
\begin{cases}
\frac{1}{r}, & \quad z_i \in O_i \cup \bot\\
\frac{1}{r} \cdot \frac{|z_i|}{|U_i|}, & \quad  z_i \in \At(U_i)\\
\end{cases}
\]

Let $P_X,P_Z,P_{Z|X}$ be the marginal distributions corresponding to a pair $(X,Z)$.
Conditioned on $Y_j = 0$ for all $j \in [i]$ and  $Z_j = z_j$ for all $j \in [i-1]$ for any $(z_1,\dots,z_{i-1}) \in \Omega_1 \times \dots \times \Omega_{i-1}$, we have for any $\ell \in \mathcal{X}$ (note that, for brevity, we have ignored the  conditioning on $Y_1,Z_1,\dots,Y_{i-1},Z_{i-1}$, in the expression below)
\begin{align}
\label{eq:klupper}
    & KL(P_{Z_i|X}(\cdot|X = \ell) || Q_{Z_i}) = \sum_{z_i \in \Omega_i} P_{Z_i|X}(z_i|X = \ell) \log \frac{P_{Z|X}(z_i|X = \ell)}{Q_{Z_i}(z_i)}
\end{align}

 Note that if $z_i \in \bot \cup O_i $ then  $Q_{Z_i}(z_i) = \frac{1}{r} \ge \frac{1}{q+2}$. Hence, $$\frac{P_{Z_i|X}(z_i|X = \ell)}{Q_{Z_i}(z_i)} \le q+2 \le 2q.$$
 
 Now we  consider the case when $z_i \in \At(U_i)$. 
 
 If $\num{z_i}{\ell} \ge (\log n)^5$ then from Lemma~\ref{lem:conc-ine} we have $$P_{Z_i|X}(z_i|X = \ell) = \frac{|z_i \cap \satisfying{\formula_{\ell}}|}{|U_i \cap \satisfying{\formula_{\ell}}|} \le 3 \num{z_i}{\ell}/\num{U_i}{\ell}.$$ Note that $$Q_{Z_i}(z_i) = \frac{1}{r} \cdot \frac{|z_i|}{|U_i|} \ge 2q \num{z_i}{\ell}/\num{U_i}{\ell}.$$  Therefore, we have $$\frac{P_{Z_i|X}(z_i|X = \ell)}{Q_{Z_i}(z_i)} \le  O(q).$$ For the case when  $\num{z_i}{\ell} < \frac{1}{n^{(\log n)^4}}$,  we have $|z_i \cap \satisfying{\formula_{\ell}}| = 0$.
 Hence the sum $$\sum_{z_i} P_{Z_i|X}(z_i|X = \ell) \log \frac{P_{Z|X}(z_i|X = \ell)}{Q_{Z_i}(z_i)}$$ when, $z_i \in \bot \cup O_i $ or $z_i \in \At(U_i)$ such that $\num{z_i}{\ell} \ge (\log n)^5$ or $\num{z_i}{\ell} < \frac{1}{n^{(\log n)^4}}$,  is at most $O(\log q)$.

Now we bound the sum $$ \sum_{z_i } P_{Z_i|X}(z_i|X = \ell) \log \frac{P_{Z|X}(z_i|X = \ell)}{Q_{Z_i}(z_i)}$$ when $z_i \in \At(U_i)$ such that $$\frac{1}{n^{(\log n)^4}} < \num{z_i}{\ell} < (\log n)^5.$$

  If $\num{z_i}{\ell} \le (\log n)^5$ then we have $$|z_i \cap \satisfying{\formula_{\ell}}| \le 2(\log n)^5$$ and thus $$P_{Z_i|X}(z_i|X = \ell) \le \frac{4(\log n)^5}{\num{U_i}{\ell}}.$$ Note that $$Q_{Z_i}(z_i) = \frac{1}{r} \cdot \frac{|z_i|}{|U_i|} \ge \frac{1}{2q} \num{z_i}{\ell}/\num{U_i}{\ell}.$$  Hence,  $$\frac{P_{Z|X}(z_i|X = \ell)}{Q_{Z_i}(z_i)} \le O(q (\log n)^5/ \num{z_i}{\ell}).$$
  
  Therefore, we have 
  \begin{align*}
      \sum_{z_i:\frac{1}{n^{(\log n)^4}} < \num{z_i}{\ell} \le (\log n)^5} P_{Z_i|X}(z_i|X = \ell) \log \frac{P_{Z|X}(z_i|X = \ell)}{Q_{Z_i}(z_i)} \\
       < \sum_{z_i: \frac{1}{n^{(\log n)^4}} < \num{z_i}{\ell} \le (\log n)^5}  \frac{4(\log n)^5}{\num{U_i}{\ell}} \log (2q (\log n)^5/ \num{z_i}{\ell})\\
       \le 2^q  \frac{8(\log n)^5}{n^{(\log n)^4}} \log (2q (\log n)^5 n^{(\log n)^4})\\
       \le \frac{2^{2q} poly(\log n)}{n^{(\log n)^4}}. 
  \end{align*}
  The second last inequality follows because there are at most $2^q$ possible values of such $z_i$, $\num{U_i}{\ell} \ge n^{(\log n)^3}/2$ and $\num{z_i}{\ell} \ge \frac{1}{n^{(\log n)^3}}$.

  Now by Lemma~\ref{lem:kl} conditioned on the event that $Y_j =0$ for all $j \le i$ we have
  $$I(X;Z_i|Y_1,Z_1,\dots,Y_{i-1},Z_{i-1},Y_i) \le  KL(P_{Z_i|X}(\cdot|X = \ell) || Q_{Z_i}) \le O(\log q) + \frac{2^{2q} poly(\log n)}{n^{(\log n)^3}}.$$
  
 

\end{proof}

\begin{proof}[Proof of Lemma~\ref{lem:bound-mutual-inf-overall}]
We will first prove that for any $i$ $$I(X;(Y_i,Z_i)|Y_1,Z_1,\dots,Y_{i-1},Z_{i-1}) \leq O(\log \log n) + O(\log q) + \frac{2^{2q} poly(\log n)}{n^{(\log n)^3}}.$$
By the chain rule of mutual information,  
    \begin{align*}
        & I(X;(Y_i,Z_i)|Y_1,Z_1,\dots,Y_{i-1},Z_{i-1}) \\ 
        = & I(X;Y_i|Y_1,Z_1,\dots,Y_{i-1},Z_{i-1}) + I(X;Z_i|Y_1,Z_1,\dots,Y_{i-1},Z_{i-1},Y_i) \\ 
        \le & O(\log \log n) + O(\log q) + \frac{2^{2q} poly(\log n)}{n^{(\log n)^3}},
    \end{align*} 
    where the last inequality follows from Lemma~\ref{lem:bound-mutual-inf}.

    Again by the chain rule of mutual information, we have 
    \begin{align*}
    & I(X;(Y_1,Z_1,\dots,Y_q,Z_q)) \\ 
    = & \sum_{i=1}^{q} I(X;(Y_i,Z_i)|Y_1,Z_1,\dots,Y_{i-1},Z_{i-1}) 
    \le  q ( O(\log \log n) + O(\log q) + \frac{2^{2q} poly(\log n)}{n^{(\log n)^3}}).
    \end{align*}
\end{proof}

\section{Conclusion}
In this paper, we study the power of SAT oracles in the context of approximate model counting and show a lower bound of $\tilde{\Omega}(\log n)$ on the number of oracle calls. This is in contrast to other settings where a SAT oracle is provably more powerful than an NP oracle. In fact, we prove that even with a much more powerful oracle (namely $\sat$ oracle), the number of queries needed to approximately count the number of satisfying assignments of a Boolean formula is $\tilde{\Omega}(\log n)$.

\bibliography{references}

\appendix
\section{Proof of Lemma~\ref{obs:core}}
\label{sec:oblivious-equivalence}

Consider any  general $\counter$, $T$. We will show that there exists a semi-oblivious counter that performs similarly. 
Given a sequence of query-sample pairs $\{(A_1,s_1),\dots,(A_{i-1},s_{i-1})\}$, we say the query $A_i$ is a good strategy by $T$ (given  $\{(A_1,s_1),\dots,(A_{i-1},s_{i-1})\}$) if the counter $T$ can return the correct  output by fixing the next query to $A_i$.
It suffices to show that, given a sequence of query-sample pairs $\{(A_1,s_1),\dots,(A_{i-1},s_{i-1})\}$, if  $A_i$ is a good strategy then any $A'_i$ is also a good strategy if  $A'_{i} \cap \{s_1,\dots,s_{i-1}\} = A_{i} \cap \{s_1,\dots,s_{i-1}\}$ and $|A'_i \cap A| = |A_i \cap A| $   for atoms $A \in At(A_1,\dots,A_{i-1})$. This means that to fix the next query, all it requires to fix the intersection size with each atom $A \in At(A_1,\dots,A_i)$ and a subset of $  \{s_1,\dots,s_{i-1}\}$ (to be included in next query). 
We prove it in the following claim.


\begin{claim}

Suppose $A_{i}$ is a good strategy for  $\{(A_1,s_1),\dots,(A_{i-1},s_{i-1})\}$. Consider $A'_i$ such that $A'_{i} \cap \{s_1,\dots,s_{i-1}\} = A_{i} \cap \{s_1,\dots,s_{i-1}\}$ and $|A'_i \cap A| = |A_i \cap A| $     for atoms $A \in At(A_1,\dots,A_{i-1})$. Then $A'_i$ is also a good strategy for $\{(A_1,s_1),\dots,(A_{i-1},s_{i-1})\}$.

\end{claim}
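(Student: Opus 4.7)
The plan is to construct a bijection $\theta : \tf^n \to \tf^n$ that witnesses the symmetry between $A_i$ and $A'_i$, and then use it to transfer the correctness guarantee of the given general counter $T$ from the $A_i$-strategy to the $A'_i$-strategy.

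First, I would establish the existence of a permutation $\theta$ satisfying (a) $\theta(s_j) = s_j$ for every $j < i$, (b) $\theta(A) = A$ setwise for every atom $A \in \At(A_1, \dots, A_{i-1})$, and (c) $\theta(A_i) = A'_i$. Such a $\theta$ exists because of the two hypotheses: the identity $A_i \cap \{s_1, \dots, s_{i-1}\} = A'_i \cap \{s_1, \dots, s_{i-1}\}$ forces $A_i$ and $A'_i$ to agree on the already-seen portion of every atom, while $|A_i \cap A| = |A'_i \cap A|$ forces them to have matching cardinalities on the unseen portion of every atom. Within each atom one can therefore biject the unseen part of $A_i$ onto the unseen part of $A'_i$ (and the respective complements inside the atom onto each other), and assemble these local bijections into a single $\theta$ on $\tf^n$.

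Next, for any input $\formula$ consistent with the observed history, I would define $\formula^*$ by $\satisfying{\formula^*} := \theta(\satisfying{\formula})$; then $|\satisfying{\formula^*}| = |\satisfying{\formula}|$ and, because $\theta$ fixes each previous sample and each previously queried set $A_j$ for $j < i$ (every such $A_j$ being a union of atoms), the history $\{(A_1, s_1), \dots, (A_{i-1}, s_{i-1})\}$ occurs with identical probability under a run of $T$ on $\formula$ and on $\formula^*$. Moreover, since $\theta$ bijects $\satisfying{\formula} \cap A_i$ onto $\satisfying{\formula^*} \cap A'_i$, a uniform sample from the first set maps under $\theta$ to a uniform sample from the second. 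This allows me to couple, for any run of $T$ on $\formula$ that queries $A_i$ at step $i$, an associated run on $\formula^*$ of a modified counter $T'$ that queries $A'_i$ at step $i$ and, for every subsequent step $i+j$, queries $\theta(A_{i+j})$ whenever $T$ queries $A_{i+j}$, translating the received samples back via $\theta^{-1}$ before feeding them into $T$'s logic. Under this coupling, $T'$ outputs on $\formula^*$ exactly the numerical estimate $T$ would output on $\formula$; since $T$ is correct on $\formula$ with probability at least $1 - \delta$ and $|\satisfying{\formula^*}| = |\satisfying{\formula}|$, the counter $T'$ is correct on $\formula^*$ with the same probability, exhibiting a strategy whose $i$th query is $A'_i$ and which is also correct. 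As $\formula$ was arbitrary, this shows $A'_i$ is a good strategy.

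The main obstacle will be formalizing the coupling uniformly across all future adaptive queries: $T$'s subsequent queries can introduce elements entirely outside $\At(A_1, \dots, A_{i-1})$, so $\theta$ must be treated as a permutation of the entire cube $\tf^n$, and one must verify at every depth of $T$'s decision tree that the bijection on transcripts respects $T$'s (possibly random) decisions, not just the single step $i$. Handling this cleanly via the coordinatewise action of $\theta$ on transcripts, together with the invariance of the uniform distribution under $\theta$, is the key technical care required.
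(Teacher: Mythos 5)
Your proposal is correct and follows essentially the same strategy as the paper: both proofs construct a permutation of $\tf^n$ that fixes each previous sample $s_j$ and each atom of $\At(A_1,\dots,A_{i-1})$ setwise while mapping $A_i$ to $A'_i$, then use this symmetry to transfer the correctness guarantee to the $A'_i$-strategy. The paper states the transfer step more tersely, whereas you spell out the coupling and translation of subsequent queries via $\theta$ explicitly, but the underlying argument is the same.
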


\begin{proof}

We denote by $\mathcal{S}_{N}$ the symmetric group acting on a set of size $N$.  Any $\sigma\in \mathcal{S}_N$ can be thought of acting on any set of size $N$ (by thinking the elements of the set as numbered $1, \dots, N$ and $\sigma$ acting on the set $[N]$). For any element $x$ in the set, we will denote by $\sigma(x)$ the element after the action of $\sigma$. 
For any $\sigma \in \mathcal{S}_{N}$ and set $A$ (with $|A|=N$)  we denote by $\sigma(A)$ the following set 
$\sigma(A) := \{\sigma(x)\ |\ x \in A\}$.

 Let $\sigma\in \mathcal{S}_{2^n}$ be a permutation acting on the set $\tf^n$. For any $\formula$ observe that $|\satisfying{\formula}| = |\sigma(\satisfying{\formula})|$.
 Since any counter  estimates  $|\satisfying{\formula}|$ only, we observe that if   $A_{i}$ is a good strategy for  $\{(A_1,s_1),\dots,(A_{i-1},s_{i-1})\}$ then  $\sigma(A_{i})$ is also a good strategy for $\{(\sigma(A_1),\sigma(s_1)),\dots,(\sigma(A_{i-1}),\sigma(s_{i-1})\}$ for any $\sigma : \tf^n \rightarrow \tf^n$ that preserves the atoms $\At(A_1, \dots, A_{i-1})$ and the elements $\{s_1, \dots, s_{i-1}\}$.
 
 
 Since   $|A'_i \cap A| = |A_i \cap A| $     for atoms $A \in At(A_1,\dots,A_{i-1})$ and $A'_{i} \cap \{s_1,\dots,s_{i-1}\} = A_{i} \cap \{s_1,\dots,s_{i-1}\}$, there exists a $\sigma$ such that $\sigma(A_j) = A_j$, $\sigma(s_j) = s_j$ for all $j \le i-1$ and also  $\sigma(A_i) = A'_i$. By our earlier observation, $A'_i$ is also a good strategy for $\{(A_1,s_1),\dots,(A_{i-1},s_{i-1})\}$.
\end{proof}






\end{document}